\documentclass[onecolumn, english]{IEEEtran}
\usepackage[hidelinks]{hyperref} 
\usepackage{url}   
\usepackage{algorithm}
\usepackage[noend]{algpseudocode}

\algrenewcommand\algorithmicindent{5mm}%

\usepackage{amsmath}
\allowdisplaybreaks
\usepackage{amssymb,amsthm}
\usepackage{amsfonts}
\usepackage{cite}

\usepackage{amssymb,amsthm}
\usepackage{amsfonts}
\usepackage{cite}
\usepackage{tikz}
\usepackage{bbm}
\usetikzlibrary{positioning,arrows,shapes,chains,fit,scopes}
\usepackage{pgfplots}
\usepackage{pgfplotstable}	
\usepackage{booktabs}

\usepackage{threeparttable}
\usepackage{makecell}
\usepackage{multirow}
\usepackage{booktabs,array,longtable}

\usepackage{colortbl}
\usepackage{color}
\definecolor{bgcolor}{rgb}{0.93,0.99,1}
\definecolor{bgcolor2}{rgb}{0.8,1,0.8}
\definecolor{bgcolor3}{rgb}{0.50,0.90,0.50}
\usepackage{tcolorbox}
\usepackage{pifont}
\definecolor{mydarkgreen}{RGB}{39,130,67}
\definecolor{mydarkred}{RGB}{192,25,25}

\usepackage{blkarray, bigstrut}
 
\allowdisplaybreaks

\color{blue}\theoremstyle{plain}\color{black}
\newtheorem{theorem}{Theorem}
\newtheorem{lemma}[theorem]{Lemma}

\newtheorem{proposition}[theorem]{Proposition}
\theoremstyle{definition}

\theoremstyle{remark}
\newtheorem{remark}{Remark}

\usepackage{lipsum}
\usepackage{mathtools}
\usepackage{cuted}

\newcommand{\cX}{\mathcal X}
\newcommand{\cY}{\mathcal Y}
\newcommand{\cZ}{\mathcal Z}

\newcommand{\hbin}{h_2}

\title{Sub-optimality of Marton's Inner Bound for the Two-Receiver Broadcast Channel}
\author{Mian Huang, Yanxiao Liu and Yi Liu
\thanks{The authors are listed in alphabetical order.  

Mian Huang is with the Multimoon Lab, Singapore. Email: \texttt{mhuang5865@gmail.com} 

Yanxiao Liu is with the Department of Electrical and Electronic Engineering, Imperial College London, London, UK. Email: \texttt{y.liu2@imperial.ac.uk}

Yi Liu is with the Department of Information Engineering, The Chinese University of Hong Kong, Hong Kong, China. Email: \texttt{ly023@ie.cuhk.edu.hk}

The code for reproducing our numerical computations is available at \url{https://github.com/yanxiaoliu-mike/Suboptimality_Marton}
}

}

\begin{document}
\maketitle

\begin{abstract}
Marton's inner bound, the best-known achievable region for a general discrete memoryless broadcast channel, was proposed by Katalin Marton in 1979, and whether it always achieves the capacity region has remained open since then.
In this paper, we establish its strict sub-optimality: we show that the capacity region of some discrete memoryless broadcast channels can be strictly larger than Marton's inner bound. 
\end{abstract}

\section{Introduction}

The communication system depicted in Figure \ref{fig:broadcast} presents a two-receiver memoryless broadcast channel $(T_{YZ|X}, \cX, \cY, \cZ)$, where the sender attempts to transmit a private message $M_1$ to receiver $Y$, a private message $M_2$ to receiver $Z$, and a common message $M_0$ to both receivers. 
This model was originally proposed by Cover \cite{cover1972broadcast} (and a detailed definition can be found in \cite[Chapters 5, 8]{el2011network}), but the full  characterization of the capacity region remains one of the most  fundamental open problems in information theory for decades.

\tikzstyle{box}=[rectangle, draw, text centered]
\tikzstyle{line} = [draw, -latex']

\begin{figure}[htb]
    \centering
    \begin{tikzpicture}[scale=1.2, every node/.style={scale=1}]
\node at (-0.2,-0.6) {$M_0, M_1,M_2$}; 
\draw [->,thick] (-1.2,-0.8) -- (0.8,-0.8);
\draw (0.8,-0.2) rectangle +(1.8,-1.2); \node at (1.65,-0.75) {Transmitter};
\draw [->,thick] (2.6,-0.8)-- (3.6,-0.8); \node at (3.1, -0.6) {$ X^n$};
\draw (3.6, -2.2) rectangle +(1.8,2.8); \node at (4.5,-0.8) {$T_{YZ|X}^{\otimes n}$};
\draw [->,thick] (5.4, 0) --(6.4, 0); \node at (6.0,0.3) {$Y^n$};
\draw [->,thick] (5.4, -1.6) --(6.4, -1.6); \node at (6.0,-1.3) {$Z^n$};
\draw (6.4,-0.6) rectangle +(1.8,1.2); \node at (7.3,0) {Receiver Y};
\draw (6.4,-2.2) rectangle +(1.8,1.2); \node at (7.3,-1.6) {Receiver Z};
\draw [->,thick] (8.2, 0) --(10.2,0); \node at (9.2, 0.3) {$\hat{ M}_0,\hat{ M}_1$};
\draw [->,thick] (8.2, -1.6) --(10.2,-1.6); \node at (9.2, -1.3) {$\hat{ M}_0,\hat{ M}_2$};
\end{tikzpicture}
\caption{Two-receiver broadcast communication system.}
\label{fig:broadcast}
\end{figure}

In 1979, Marton~\cite{Marton1979} proposed the following achievable region, which has since been referred to as \emph{Marton's inner bound} and has remained the best-known achievable region for two-receiver broadcast channels for nearly fifty years. 
\begin{theorem}
[Marton's inner bound, \cite{Marton1979}]
\label{thm:marton}
The union of non-negative rate triples $(R_0, R_1, R_2)$ satisfying the constraints
\begin{subequations}
\begin{align}
    R_0 & \leq \min \{I(W;Y),I(W;Z) \},\label{eqnM1}\\ 
    R_0+R_1 & \leq I(U,W;Y),\label{eqnM2}\\
    R_0+R_2 & \leq I(V,W;Z),\label{eqnM3}\\
    R_0+R_1 + R_2 & \leq \min \{I(W;Y),I(W;Z) \} + I(U;Y|W)+ I(V;Z|W) - I(U;V|W),\label{eqnM4}
\end{align}
\end{subequations}
for any triple of random variables $(U,V,W)$ such that $(U,V,W)\rightarrow  X\rightarrow (Y,Z)$ is achievable for a broadcast channel $T(y,z|x)$. 
We denote this region as $\mathcal{M}(T)$.
\end{theorem}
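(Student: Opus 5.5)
The plan is to prove achievability of every rate triple satisfying \eqref{eqnM1}--\eqref{eqnM4} for a fixed $p(u,v,w,x)$. I will use the standard combination of superposition coding, which carries the common message and a cloud center $W^n$, with Gelfand--Pinsker-type binning, i.e.\ mutual covering, for the private parts. Taking the union over input distributions then gives $\mathcal{M}(T)$. It suffices to take $p(u,v,w,x)=p(u,v,w)\,\mathbb{1}\{x=f(u,v,w)\}$ up to time sharing, or simply to let $X^n$ be generated through $p(x|u,v,w)$ symbol by symbol.

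\textbf{Rate splitting and codebook.} First, split $M_1=(M_{10},M_{11})$ and $M_2=(M_{20},M_{22})$, with rates $R_{10}+R_{11}=R_1$ and $R_{20}+R_{22}=R_2$. The common-layer index is $(M_0,M_{10},M_{20})$, of rate $R_0'=R_0+R_{10}+R_{20}$. The codebook is built in three steps:
\begin{itemize}
\item Generate $2^{nR_0'}$ i.i.d.\ codewords $w^n\sim\prod p(w_i)$.
\item For each $w^n$, generate $2^{n(R_{11}+\tilde R_1)}$ codewords $u^n\sim\prod p(u_i|w_i)$, arranged in $2^{nR_{11}}$ bins.
\item For each $w^n$, similarly generate $2^{n(R_{22}+\tilde R_2)}$ codewords $v^n\sim\prod p(v_i|w_i)$ in $2^{nR_{22}}$ bins.
\end{itemize}

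\textbf{Encoding and decoding.} To encode, look in the designated pair of bins for a pair $(u^n,v^n)$ that is jointly typical with $w^n$, and transmit $x^n$ generated from $p(x|u,v,w)$. Receiver $Y$ decodes $(w^n,u^n)$ by joint typicality and extracts $(M_0,M_{10},M_{11})$; receiver $Z$ does the same with $(w^n,v^n)$.

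\textbf{Error analysis.} By the mutual covering lemma, encoding succeeds if $\tilde R_1+\tilde R_2>I(U;V|W)$. Packing at receiver $Y$ requires
\begin{align*}
R_{11}+\tilde R_1 &< I(U;Y|W),\\
R_0'+R_{11}+\tilde R_1 &< I(U,W;Y),
\end{align*}
with the symmetric conditions at receiver $Z$.

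\textbf{Fourier--Motzkin elimination (expected main obstacle).} The step requiring most care is eliminating $\tilde R_1,\tilde R_2,R_{10},R_{20},R_{11},R_{22}$ to show that the resulting region is exactly \eqref{eqnM1}--\eqref{eqnM4}. In particular, the bound $\min\{I(W;Y),I(W;Z)\}$ appears in \eqref{eqnM4} because the sum-rate inequality is obtained by adding the $Y$-side total constraint to the $Z$-side private constraint, or vice versa, and subtracting $I(U;V|W)$. Constraint \eqref{eqnM1} comes from the nonnegativity of the split rates combined with the total-rate conditions; strictly, the elimination produces a slightly larger set of inequalities that reduces to this form. I will track the individual-rate inequalities produced by elimination, such as $R_0+R_1\le I(U,W;Y)$, and check that any additional inequalities are redundant or implied after the union over distributions, as is standard (cf.\ \cite[Chapter 8]{el2011network}). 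Finally, a standard random-coding argument shows that the average error probability vanishes, so a deterministic code exists.
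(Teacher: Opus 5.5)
The paper gives no proof of Theorem~\ref{thm:marton}; it quotes it from Marton. So your proposal has to stand on its own. Your scheme and the covering/packing conditions you list are the standard ones and are correct, as is your explanation of where the $\min$ in \eqref{eqnM4} comes from.

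The step you flagged, however, is misdescribed, and the plan to show that the eliminated region is \emph{exactly} \eqref{eqnM1}--\eqref{eqnM4} cannot succeed. Eliminating $\tilde R_1,\tilde R_2,R_{10},R_{20}$ and discarding redundant inequalities gives, for fixed $p$,
\begin{align*}
R_0+R_1&<I(U,W;Y),\qquad R_0+R_2<I(V,W;Z),\\
R_0+R_1+R_2&<\min\{I(W;Y),I(W;Z)\}+I(U;Y|W)+I(V;Z|W)-I(U;V|W),\\
2R_0+R_1+R_2&<I(U,W;Y)+I(V,W;Z)-I(U;V|W).
\end{align*}
It also gives rate-free conditions, the essential one being $I(U;Y|W)+I(V;Z|W)>I(U;V|W)$.

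Constraint \eqref{eqnM1} is not among the outputs. Since receiver $Y$ decodes $(w^n,u^n)$ jointly, the cloud rate is only limited by $R_0'<I(U,W;Y)-R_{11}-\tilde R_1$. Nonnegativity of the split rates cannot turn this into $R_0<I(W;Y)$. For instance, with $W$ constant \eqref{eqnM1} forces $R_0=0$, whereas your scheme carries a positive common rate in the satellite codewords $u^n(m_0',\cdot)$, $v^n(m_0',\cdot)$. So for fixed $p$ the eliminated region is in general strictly larger than the target.

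Achievability only needs the inclusion of \eqref{eqnM1}--\eqref{eqnM4} in this region, and that holds for every fixed $p$ without any union argument. The bounds \eqref{eqnM2}, \eqref{eqnM3} and the sum-rate bound appear verbatim, and adding \eqref{eqnM1} to \eqref{eqnM4} gives
\begin{align*}
2R_0+R_1+R_2&\le 2\min\{I(W;Y),I(W;Z)\}+I(U;Y|W)+I(V;Z|W)-I(U;V|W)\\
&\le I(U,W;Y)+I(V,W;Z)-I(U;V|W).
\end{align*}

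The rate-free condition is the genuinely missing piece. It comes from $\tilde R_1<I(U;Y|W)$, $\tilde R_2<I(V;Z|W)$ and $\tilde R_1+\tilde R_2>I(U;V|W)$. In the private-message bound without $W$ it is implied by the sum-rate constraint and $R_1+R_2\ge 0$. Here, however, \eqref{eqnM4} contains the extra term $\min\{I(W;Y),I(W;Z)\}$, so it is not implied. For example, let $U=V$ be a fair bit independent of $(W,X)$, with $\min\{I(W;Y),I(W;Z)\}>\log 2$. Then the region \eqref{eqnM1}--\eqref{eqnM4} contains nonzero triples, while your conditions are infeasible for that $p$, because both receivers insist on decoding their satellite codewords.

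You must treat this case separately. When $I(U;Y|W)+I(V;Z|W)\le I(U;V|W)$, \eqref{eqnM4} forces $R_0+R_1+R_2\le\min\{I(W;Y),I(W;Z)\}$, which already implies \eqref{eqnM1}--\eqref{eqnM3}. Such triples are achieved by putting all messages into the cloud and letting each receiver decode $w^n$ alone; equivalently, run your scheme with $U,V$ constant and the same $p_{WX}$.

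With these two corrections, your argument proves the theorem once you add two routine pieces: the usual closure argument for the non-strict inequalities, and the observation that a receiver need not decode a satellite layer that carries no message (this also removes the side conditions $I(U;Y|W)>0$ and $I(V;Z|W)>0$).
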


For many specific classes of broadcast channels, Marton's inner bound matches the capacity region, including \cite{geng2014capacity, ahlswede1975source, bergmans1973random, gamal1979capacity, gallager1974capacity, geng2013marton, nair2009capacity,  nair2016optimality, gel1980capacity, weingarten2006capacity, gohari2026capacity}, where in several of these Marton's inner bound is the capacity: it coincides with the UVW-Outer Bound \cite{nair2011note} for several classes of broadcast channels \cite{ahlswede1975source, bergmans1973random, gamal1979capacity, gallager1974capacity, geng2013marton, nair2009capacity,  nair2016optimality, gel1980capacity, weingarten2006capacity,GengGohariNairYu2014} and coincides with the auxiliary receiver outer bound \cite{gohari2021outer} on classes of sum-broadcast channel \cite{gohari2026capacity}.

A major bottleneck in evaluating Marton's achievable region for the two-receiver broadcast channels is its computability. Early cardinality arguments based on Carath\'eodory-type convexification were ineffective for Marton's mutually dependent auxiliaries. Consequently, evaluating Marton's achievable region---even for binary-input broadcast channels---was intractable using these traditional arguments (see \cite{HajekPursley1979} for an early attempt). 

For a specific binary input channel, the binary skew-symmetric broadcast channel, Nair and Wang \cite{nair08conjecture} conjectured that it suffices to consider either \(U=X\) and \(V\) is constant, or \(V=X\) and \(U\) is constant, to evaluate the sum-rate of Marton's inner bound. Major progress was achieved by Gohari and Anantharam~\cite{gohari2012evaluation}, who introduced a novel perturbation technique. This led to finite cardinality bounds on the auxiliary random variables in Marton's inner bound for any discrete memoryless broadcast channel. Specifically, to compute the full Marton inner bound, one can impose \(|\mathcal{U}| \le |\mathcal{X}|\), \(|\mathcal{V}| \le |\mathcal{X}|\), \(|\mathcal{W}| \le |\mathcal{X}|+4\), and \(H(X|U,V,W)=0\). To compute only the sum-rate, these bounds can be tightened to \(|\mathcal{U}| \le |\mathcal{X}|\), \(|\mathcal{V}| \le |\mathcal{X}|\), \(|\mathcal{W}| \le |\mathcal{X}|+1\), and \(H(X|U,V,W)=0\). For binary-input broadcast channels, this restricts \(U\) and \(V\) to binary alphabets. They also numerically tested the conjecture in \cite{nair08conjecture}.

The above perturbation approach was further developed by Jog and Nair \cite{jog2009} to resolve the conjecture stated in \cite{nair08conjecture}. This was done by showing that $X= U \vee V$ (OR pattern) or $X=U \wedge V$ (AND pattern) could not be local maximizers. This argument was generalized by Geng, Jog, Nair, and Wang \ \cite{NairWangGeng2010, GengJogNairWang2013}, who demonstrated that to compute the sum-rate of Marton's inner bound for any binary-input broadcast channel, either \(U=X\) and \(V\) is constant, or \(V=X\) and \(U\) is constant. This established that randomized time division achieves the Marton sum-rate for binary input broadcast channels. The authors also evaluated the inner and outer bounds for the binary skew-symmetric broadcast channel, establishing the exact gap between them. This discrepancy indicated that either Marton's inner bound, the best-known outer bound at the time, or both, were loose. 

To test the optimality of the Marton sum-rate for binary-input broadcast channels, Geng, Gohari, Nair and Yu proposed numerically comparing two-letter and one-letter Marton's inner bound \cite{geng2011marton}. To evaluate the two-letter Marton's inner bound, one must evaluate it over the product of broadcast channels. However, even for binary-input broadcast channels, computing the two-letter extension using the cardinality bounds from \cite{gohari2012evaluation} yields alphabet sizes of \(4\) for \(U\) and \(V\), and \(5\) for \(W\), resulting in a prohibitively large parameter space. To address this computational hurdle, a sequence of works \cite{NairWangGeng2010, GengJogNairWang2013, GohariNairAnantharam2012, GohariElGamalAnantharam2014} demonstrated that the mappings from \((U,V,W)\) to \(X\) can be vastly restricted; in particular, the so-called (generalized) AND and OR patterns were eliminated. This progress enabled initial simulations comparing the two-letter and one-letter evaluations of Marton's inner bound of binary input broadcast channels for the first time. Extensive simulations conducted between 2011 and 2012 found no examples demonstrating a strict improvement of the two-letter bound over the one-letter bound, leading researchers to conjecture that Marton's inner bound might be tight. This belief was further reinforced by the gradual discovery of the capacity for new classes of channels where Marton's inner bound proved to be tight while the prevailing outer bounds were shown to be loose.

Despite this progress,  numerical calculation of the two-letter and one-letter versions of Marton's inner bound for channels with larger input alphabet size was still practically impossible.
To address this shortcoming, new cardinality bounds were obtained by Anantharam, Gohari and Nair~\cite{AnantharamGohariNair2013,AnantharamGohariNair2019}. These bounds were obtained by combining the perturbation technique~\cite{gohari2012evaluation} with upper concave envelopes~\cite{NairConcave2013}, dual supporting-hyperplane representations, along with properties of optimizers of the inner bound. The initial bounds
$|\mathcal U|\le |\mathcal X|$ and $|\mathcal V|\le |\mathcal X|$ were replaced by $|\mathcal U|+|\mathcal V|\le |\mathcal X|+1$, and the role of the random variable $W$ was expressed through convex envelope.  
These aforementioned bounds made meaningful numerical optimization possible for ternary input broadcast channels. 

As product channels had become a natural avenue for testing the optimality of Marton's inner bound, Geng, Gohari, Nair and Yu in \cite{GengGohariNairYu2014} studied products of different classes of channels such as semi-deterministic, more capable, and less noisy channels. Marton's inner bound was found to be the capacity for such product broadcast channels. In addition, notably, the authors identified two distinct channels \(T_1\) and \(T_2\) such that the Marton sum-rate of \(T_1\otimes T_2\) was strictly larger than the sum of their individual Marton sum-rates. This, however, did not imply that Marton's inner bound is not tight because the channels \(T_1\) and \(T_2\) were different. 

To compare the one-letter and two-letter versions of Marton's inner bound, one can proceed as follows: take the maximizer of the one-letter functional \(p(u,v,w,x)\), take two i.i.d.\ copies \(p(u_1,v_1,w_1,x_1)p(u_2,v_2,w_2,x_2)\), and consider \(U=(U_1,U_2)\), \(V=(V_1,V_2)\), \(W=(W_1,W_2)\) for the two-letter Marton bound. The two-letter Marton bound is equal to the one-letter Marton bound if this product-form distribution is a global maximizer. While it is easy to see that the product distribution satisfies the proper first derivative conditions, if it is a global maximizer, the Hessian at the product distribution must be negative semi-definite. Nair was able to establish the negative semi-definite property for a product broadcast channel when one product component is binary \cite{Nair2020}. A corresponding global tensorization statement would have implied the optimality of Marton's region for binary-input channels. 

In a related approach, Gohari, Liu, and Nair observed that if a global maximizer of Marton's inner bound assumes the product form
\[
p(u_1,v_1,w_1,x_1)p(u_2,v_2,w_2,x_2),
\]
and \(X_i=f_i(U_i,V_i,W_i)\) for \(i=1,2\), then the mapping from \((U_1,U_2), (V_1,V_2), (W_1,W_2)\) to \((X_1,X_2)\) exhibits a ``rectangular form.'' This observation, combined with first-derivative optimality conditions and the implications of the conjectured additivity of Marton's region, led the authors of \cite{gohari2025conjecture} to propose a Markovity/rectangular-mapping conjecture for the optimizers of the dual Marton functional. Specifically, the conjecture posits the existence of a global optimizer for Marton's inner bound that satisfies \(I(U;V|W,X)=0\). This hypothesis was supported by extensive numerical evidence conducted by the authors, encompassing over \(10^4\) channels for each alphabet size \(|\mathcal{X}| \in \{3,4,5\}\), and more than \(100\) channels for each \(|\mathcal{X}| \in \{6,7\}\). Furthermore, although \cite{gohari2025conjecture} reports instances where local optimizers of the Marton expression violate the condition \(I(U;V|W,X)=0\), these local optimizers consistently achieved lower values than the global maximizers. However, as will be demonstrated, the Markovity conjecture does not hold for certain channels. Moreover, these specific channels can be utilized to construct counterexamples to the optimality of Marton's inner bound.

\subsection*{Our Contributions} 

In this paper, we establish the strict sub-optimality of the Marton's inner bound. 
We first identify a ternary input channel instance where the two-letter and one-letter rates, subject to a fixed input distribution,  have a positive gap,  providing a  counterexample to the Marton optimality conjecture under this restriction. The cardinality bound \( |\mathcal U| + |\mathcal V| \le |\mathcal X| + 1 \) implies that, for the evaluation of our ternary-input channels, we only need to consider the following cases: \(U=X\) and \(V\) constant; \(V=X\) and \(U\) constant; or both \(U\) and \(V\) binary. Thanks to these results, one can exhaust all possible distributions and use interval arithmetic to evaluate the one-letter expression. For the two-letter expression, it suffices to exhibit a particular instance of the joint distribution of the auxiliary random variables to establish a lower bound.
We then prove that a new discrete broadcast channel can be constructed for which the inequality persists \emph{without} an input constraint. This latter technique may be of independent interest for other information-theoretic problems. 

Our developments were discovered in the reverse order of the conjectures. We first found a counterexample to the Markovity (rectangular-mapping) conjecture of \cite{gohari2025conjecture}. Then this was used to show that the local tensorization property studied in \cite{Nair2020} failed when both components were ternary. This then led to discovering an example where the global tensorization (additivity conjecture) failed, and finally to examples where  Marton's bound was strictly sub-optimal (first with fixed-input-distributions, and then the unconstrained ones).

We note that earlier researchers had tried essentially the same numerical search that we conducted, but had not succeeded in finding a counterexample to any of the above conjectures because: (i) counterexamples appear to require a non-binary input alphabet, whereas many earlier simulations focused on binary-input channels; (ii) the counterexamples occupy only a small region of the search space; for instance, counterexamples to the Markovity conjecture did not appear in over \(10^4\) random channels experiment of \cite{gohari2025conjecture}; and (iii) the violation gaps are small, of order \(10^{-6}\). The AI-guided search algorithm used by the first author was based on  the method of elimination geometry \cite{Huang2026eg} and was iterative and structure-driven: it successively used counterexamples for one conjecture to guide the search for counterexamples to another, progressing from the failure of the Markovity conjecture to dual nonadditivity, and then to a fixed-input-distribution version of Marton’s conjecture.

The paper is organized as follows.
After introducing the necessary notation and preliminaries, we introduce our main techniques in Section~\ref{sec:tools}, including a gradient-shaping tool and a constraint-removal argument for constructing an unconstrained counterexample from an input-constrained counterexample.
In Section~\ref{sec:counterexamples}, we prove the strict sub-optimality of Marton's inner bound, starting from an input-constrained counterexample and utilizing the aforementioned tools.

\subsection*{Notation and preliminaries}
The full Marton's inner bound has been introduced in Theorem~\ref{thm:marton}. 
In this paper we use the full auxiliary $W$ but focus on the private-message case $R_0=0$, since a strict improvement in the maximum private-message sum-rate is already enough to prove strict sub-optimality of the complete Marton region.

Given a broadcast channel $T_{YZ|X}$, consider any distribution $P_{WUVX}$ satisfying $(W,U,V)\rightarrow X\rightarrow (Y,Z)$. Define the information functional
\begin{equation}
M_T(P_{WUVX})
:=\min\{I(W;Y),I(W;Z)\}
+I(U;Y|W)+I(V;Z|W)-I(U;V|W).
\label{eq:MTlaw}
\end{equation}
For a fixed input distribution $P_X$, let
\begin{equation*}
M_T(P_X):=\max_{P_{WUV|X}} M_T(P_{WUVX}),
\end{equation*}
The quantity $M_T(P_{X})$ equals the largest sum-rate permitted by Marton's inner bound under the prescribed input distribution $P_X$. 

Accordingly, Marton's sum-rate $M_T$ for the channel $T$ can be written as:
$$M_T = \max_{P_{WUVX}} M_T(P_{WUVX})=\max_{P_X} M_T(P_X).$$

For a prescribed joint input distribution $q_{X_1X_2}$, the quantity $M_{T^{\otimes 2}}(q_{X_1X_2})$ is defined analogously for the product channel $T^{\otimes 2}$, without normalization by a factor of two.

For each $\alpha\in[0,1]$, define the affine functional
\begin{equation}
L_{\alpha,T}(P_{WUVX})
:=(1-\alpha)I(W;Y)+\alpha I(W;Z)
+I(U;Y|W)+I(V;Z|W)-I(U;V|W).
\label{eq:Ltdef}
\end{equation}
For every fixed distribution $P_{WUVX}$, we have 
\[
M_T(P_{WUVX})
=\min_{\alpha \in[0,1]}L_{\alpha,T}(P_{WUVX})
=\min\{L_{0,T}(P_{WUVX}),L_{1,T}(P_{WUVX})\},
\]
where the second equality follows from the fact that $L_{\alpha,T}(P_{WUVX})$ is affine in $\alpha$.  

In a similar vein, for a fixed input distribution $P_X$, define
\[ L_{\alpha,T}(P_X) := \max_{p_{WUV|X}} L_{\alpha,T}(P_{WUVX}),\]
and let $$ L_{\alpha,T} := \max_{P_X} L_{\alpha,T}(P_X).$$ 
The max-min theorem established in \cite{GengGohariNairYu2014} yields
\begin{align*}
    M_T(P_X) &= \min_{\alpha \in [0,1]} L_{\alpha,T}(P_X), \\ 
    M_T &= \min_{\alpha \in [0,1]} L_{\alpha,T}. 
\end{align*}

Finally, for any $\alpha\in [0,1]$ and any sequence $\{a_x\}\in \mathbb R^{|\cX|},x\in \cX$, define the dual functional
$$F_T(\{a_x\},\alpha) := \max_{P_{UVX}}G_T(P_{UVX},\{a_x\},\alpha),$$
where
\begin{equation}
    G_T(P_{UVX},\{a_x\},\alpha) := 
    -\alpha H(Y)-(1-\alpha) H(Z) +I(U;Y)+I(V;Z)-I(U;V) +\sum_{x\in\cX}p(x)a_x.
    \label{eq:markovity-F}
\end{equation}

Throughout the paper all logarithms are natural, so all rates are measured in nats unless explicitly stated otherwise. 
All numerical inequalities used for the explicit instance are certified by exact rational arithmetic and outward-rounded MPFR calculations.

\section{Main Techniques}
\label{sec:tools}

In this section, we first introduce the techniques we use to construct a counterexample to Marton's optimality. Note that if $M_{T^{\otimes 2}}>2M_T$, then a normalized two-letter Marton scheme achieves a rate pair whose sum \emph{strictly} exceeds the sum-rate of every point in the one-letter Marton region. 
Consequently, the one-letter Marton region is strictly sub-optimal. 
In order to construct such a counterexample, we first study the following question:

\textbf{Sub-optimality of Marton at a fixed input distribution:}
We say that Marton's inner bound is not tight at a fixed input distribution $p^*_X$ if one can find $q^*_{X_1X_2}$ such that for any $x$, 
\begin{equation}
p_X^*(x)
=
\frac12
\bigl(
q^*_{X_1}(x)+q^*_{X_2}(x)
\bigr), 
\label{eq:average-marginal}
\end{equation}
and $M_{T^{\otimes2}}(q^*_{X_1X_2})>2M_{T}(p_X^*)$.
\begin{remark}
We will present such a channel and $p^*_X$ in Section \ref{sec:counterexamples}.
\end{remark}

\subsection{From a constrained-input counterexample to an unconstrained counterexample}
We now introduce a technique to construct an unconstrained counterexample from an input-constrained broadcast-channel instance. 
Suppose that a gap between the two-letter and one-letter rates has been found at a prescribed input distribution $p^*$, but that $p^*$ does not maximize the corresponding unconstrained one-letter objective. 
Simply removing the input constraint may then destroy the separation, since the one-letter optimizer is free to move to another input distribution. 
A \emph{gradient shaping} lemma provides an almost automatic device for modifying a concave fixed-input objective so that the prescribed distribution $p^*$ becomes a global maximizer. 
The second construction, \emph{constraint removal}, realizes this gradient-shaping device through an enlarged broadcast channel and shows that the original multiletter gain can be preserved after the input constraint is removed. Theorem \ref{thm:constraint-removal} below then utilizes this concept to formally state a result for moving from a constrained problem to an unconstrained problem.

We begin with the purely geometric gradient-shaping problem: given a finite concave function $f$ on the probability simplex, we seek a family of concave perturbations that makes a prescribed point $p^*$ a global maximizer. 
The observation is that, for each coordinate $i$, the concave function $p\mapsto  \hbin(p_i)+p_iL$ has partial-derivative at $p^*$ equal to $\big(
L+\log\frac{1-p_i^*}{p_i^*} \big)$. Thus, these perturbations allow us to modify the gradient independently, one coordinate at a time.

\begin{theorem}[Gradient shaping]
\label{thm:gradient_shaping}
Let $f$ be a finite concave function on the probability simplex, let $p^*=(p_1^*,\ldots,p_d^*)$ lie in its relative interior, and let $g=(g_1,\ldots,g_d)$ be any supergradient of $f$ at $p^*$. Choose $r\in\arg\max_i g_i$. 
For sufficiently large $L$, for each $i\neq r$, define $A_i(L):=
L+\log\frac{1-p_i^*}{p_i^*}$ and
\begin{equation}
\lambda_L
:=
\Big(
1+\sum_{i\neq r}\frac{g_r-g_i}{A_i(L)}
\Big)^{-1},
\qquad
\delta_{i,L}
:=
\lambda_L\frac{g_r-g_i}{A_i(L)},
\end{equation}
both of which are nonnegative, and then we have $\lambda_L+\sum_{i\neq r}\delta_{i,L}=1$, and the concave function
\begin{equation}
\Phi_L(p)
:=
\lambda_L f(p)
+
\sum_{i\neq r}\delta_{i,L}
\bigl[\hbin(p_i)+p_iL\bigr]
\label{eq:Phiabstract}
\end{equation}
has $p^*$ as a global maximizer. 
Moreover, $\lambda_L\to 1$ and $\delta_{i,L}\to0$ as $L\to\infty$. 
\end{theorem}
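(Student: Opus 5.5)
The argument has three steps: check the elementary properties of $\lambda_L,\delta_{i,L}$, exhibit a supergradient of $\Phi_L$ at $p^*$ that is constant across all coordinates, and conclude global optimality on the simplex from that.

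\emph{Step 1: elementary properties.} Since $r$ maximizes $g_i$, every $g_r-g_i\ge 0$. Once $L>\max_{i\ne r}\log\frac{p_i^*}{1-p_i^*}$, which is finite because $p^*$ is interior, each $A_i(L)>0$. Hence $\lambda_L\in(0,1]$ and $\delta_{i,L}\ge 0$. The identity $\lambda_L+\sum_{i\ne r}\delta_{i,L}=1$ follows from $\lambda_L\bigl(1+\sum_{i\ne r}\frac{g_r-g_i}{A_i(L)}\bigr)=1$. As $L\to\infty$, $A_i(L)\to\infty$, so the sum tends to $0$; therefore $\lambda_L\to1$ and $\delta_{i,L}\to 0$. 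Concavity of $\Phi_L$ holds because it is a nonnegative combination of concave functions: $\hbin$ is concave and $p_iL$ is linear.

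\emph{Step 2: supergradient of $\Phi_L$.} Each term $h_i(p):=\hbin(p_i)+p_iL$ depends only on $p_i$ and is differentiable at the interior point $p^*$. Its gradient is $A_i(L)e_i$, using $\hbin'(t)=\log\frac{1-t}{t}$ in nats. Since $h_i$ is concave and differentiable, its gradient is a supergradient. A nonnegative combination of supergradients is a supergradient of the combination, so
\[
\tilde g:=\lambda_L g+\sum_{i\ne r}\delta_{i,L}A_i(L)e_i
\]
is a supergradient of $\Phi_L$ at $p^*$. Coordinate by coordinate, for $i\ne r$ we get $\tilde g_i=\lambda_L g_i+\lambda_L(g_r-g_i)=\lambda_L g_r$, and $\tilde g_r=\lambda_L g_r$. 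So $\tilde g=\lambda_L g_r\mathbf 1$.

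\emph{Step 3: global maximality.} For any $p$ in the simplex, the supergradient inequality gives $\Phi_L(p)\le \Phi_L(p^*)+\langle \tilde g,p-p^*\rangle=\Phi_L(p^*)+\lambda_L g_r\sum_i(p_i-p_i^*)=\Phi_L(p^*)$, since both $p$ and $p^*$ sum to one. Hence $p^*$ is a global maximizer.

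The only delicate point is Step 2: we need the supergradient inequality for $f$ to hold on the whole simplex. This is exactly the definition of a supergradient of a concave function on a convex domain, so nothing beyond the hypothesis is required. We also need the supergradient notion to be coordinatewise on $\mathbb R^d$, which makes adding multiples of $\mathbf 1$ harmless.
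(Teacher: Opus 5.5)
Your proposal is correct and follows the paper's proof essentially step for step. You form the supergradient $\lambda_L g+\sum_{i\ne r}\delta_{i,L}A_i(L)e_i$ of $\Phi_L$ at $p^*$, show it equals the constant vector $\lambda_L g_r\mathbf 1$, and conclude $\Phi_L(p)\le\Phi_L(p^*)$ on the simplex. Your Step 1 just makes explicit the elementary facts (positivity of $A_i(L)$, the normalization identity, and $\lambda_L\to1$, $\delta_{i,L}\to0$) that the paper asserts or handles via $A_i(L)=L+O(1)$.
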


The proof can be found in Appendix~\ref{app:gradient_shaping}.

For a broadcast channel $T$ instance whose multiletter rate has a strict gap, Theorem~\ref{thm:gradient_shaping} now provides a way to ``perturb'' a prescribed input distribution $p^*$ so that it becomes the global maximizer. 
The perturbations $\hbin(p_i)+p_iL$ arise naturally in broadcast-channel settings, as shown below.

Fix an integer $N\ge2$, set $L=\log N$, and enlarge the broadcast channel to
\[
\widetilde X=(X,K)\in\cX\times[N].
\]
For each input symbol $i\in\cX$, define
\begin{equation}
G_i(X,K)
:=
\begin{cases}
K,&X=i,\\
e_i,&X\neq i,
\end{cases}
\label{eq:Gi-realization}
\end{equation}
where $e_i\notin[N]$ is a distinct erasure symbol corresponding to each $i$. 
We refer to this as a \emph{common deterministic component}, since both receivers will be given the same deterministic output $G_i(X,K)$. Note that $E_i := 1_{X=i} $ is determined by $G_i$.
If the $X$-marginal is $p$, then for an arbitrary conditional distribution of $K$,
\begin{equation}
H(G_i)
= H(G_i,E_i) = H(E_i) + P(X=i) H(G_i|E_i=1) =
\hbin(p_i)+p_iH(K|X=i)
\le
\hbin(p_i)+p_iL.
\label{eq:Gi-psi}
\end{equation}
Equality holds when $K$ is uniform on $[N]$ conditional on $X=i$.  Thus the coordinate perturbation in Theorem~\ref{thm:gradient_shaping} is the largest entropy that this common-output component can generate at a prescribed $X$-marginal.

It remains to realize the \emph{weighted sum} of these perturbations together
with the original broadcast channel.  Recall that
Theorem~\ref{thm:gradient_shaping} chooses a reference coordinate $r$ and
weights $\lambda_L$ and $\{\delta_{i,L}:i\neq r\}$ satisfying $\lambda_L+\sum_{i\neq r}\delta_{i,L}=1$. 
We therefore construct a new broadcast channel $\widetilde T_L$ as follows. 
The new channel can be viewed as a state-dependent broadcast channel where the state variables are known to the two receivers. The state variable is denoted by $Q$.

Given an input $(x,k)\in\cX\times[N]$, the broadcast channel independently draws a \emph{state}, which is independent of the input variables
$(X,K)$, by
\[
Q\in\{0\}\cup\{i\in\cX:i\neq r\}
\]
with $\Pr[Q=0]=\lambda_L$ and $\Pr[Q=i]=\delta_{i,L}$ for $i\neq r$.

If $Q=0$, the channel uses the original broadcast channel $T$ on the $X$-coordinate and ignores $K$: it draws $(Y,Z)\sim T(\cdot,\cdot|x)$ and gives the receivers
\[
\widetilde Y=(0,Y),
\qquad
\widetilde Z=(0,Z).
\]
If $Q=i\neq r$, the channel instead uses the common deterministic component $G_i$ and gives both receivers
\[
\widetilde Y=(i,G_i(x,k)),
\qquad
\widetilde Z=(i,G_i(x,k)).
\] 
Thus the value of $Q$ is included explicitly in both receiver outputs. After conditioning on $Q$, the mutual-information terms decompose into the corresponding weighted contributions of the original channel and the shaping components. The following proposition makes this decomposition precise. For notational convenience in the proposition below, let $\mathcal I:=\{i\in\cX:i\neq r\}$ and write $\lambda:=\lambda_L$ and $\delta_i:=\delta_{i,L}$ for $i\in\mathcal I$.

\begin{proposition}
\label{prop:channel-shaper}
Consider the receiver-revealed mixture defined above and fix a joint distribution $P_{WUV\widetilde X}$. 
Assume that $X$ is a function of $(U,V,W)$. (It suffices to consider such distributions). As defined earlier, $M_T(P_{UVWX})$ denotes the Marton functional obtained from this distribution when the base channel $T$ is applied to the input $X$ and $K$ is ignored, and let $M_{G_i}(P_{UVWX})$ denote the Marton functional when both receivers observe $G_i$. 
Then
\begin{equation}
M_{\widetilde T_L}(P_{UVWX})
=
\lambda M_T(P_{UVWX})
+
\sum_{i\in\mathcal I}\delta_i M_{G_i}(P_{UVWX}).
\label{eq:revealed-mixture}
\end{equation}
The same identity holds with $M$ replaced by every affine form $L_\alpha$.
Moreover, if both receivers observe the same deterministic output $G=g(\widetilde X)$ and the input distribution of $\widetilde X$ is $\pi$, then
\begin{equation}
M_{\widetilde T_L}(P_{UVWX})
\leq 
\lambda M_T(P_{UVWX})
+
\sum_{i\in\mathcal I}\delta_i (H_{G_i}(P_{UVWX}) - I(U;V|W,G_i)).
\label{eq:common-deterministic}
\end{equation}
\end{proposition}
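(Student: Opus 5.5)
The plan is to condition on the state $Q$ throughout. Since $Q$ is independent of $(W,U,V,\widetilde X)$ and is revealed to both receivers, write $\widetilde Y=(Q,Y_Q)$ and $\widetilde Z=(Q,Z_Q)$, where $Y_0=Y$, $Z_0=Z$ and $Y_i=Z_i=G_i(X,K)$ for $i\in\mathcal I$. Independence of $Q$ from $(W,U,V)$ gives, by the chain rule,
\[
I(W;\widetilde Y)=I(W;Q)+I(W;Y_Q|Q)=\lambda I(W;Y)+\sum_{i\in\mathcal I}\delta_i I(W;G_i),
\]
and in the same way $I(U;\widetilde Y|W)=\lambda I(U;Y|W)+\sum_i\delta_i I(U;G_i|W)$. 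The analogous identities hold for the $Z$-receiver. The term $I(U;V|W)$ involves no output, so it equals $(\lambda+\sum_i\delta_i)I(U;V|W)$ because the weights sum to one. Substituting these into \eqref{eq:Ltdef} gives
\[
L_{\alpha,\widetilde T_L}=\lambda L_{\alpha,T}+\sum_i\delta_i L_{\alpha,G_i},
\]
exactly, for every $\alpha$. This is the affine-form statement. The hypothesis that $X$ is a function of $(U,V,W)$ is used only to make clear that the functional depends on $P_{UVWX}$, together with the conditional law of $K$, which I treat as part of the distribution.

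For \eqref{eq:revealed-mixture} the obstacle is the $\min$ in \eqref{eq:MTlaw}. The mixture only gives
\[
\min\Bigl\{\lambda I(W;Y)+\sum_i\delta_i I(W;G_i),\ \lambda I(W;Z)+\sum_i\delta_i I(W;G_i)\Bigr\}.
\]
Since both receivers see the same $G_i$ in every shaping state, the shaping contributions coincide in the two arguments. So the $\min$ equals
\[
\lambda\min\{I(W;Y),I(W;Z)\}+\sum_i\delta_i I(W;G_i),
\]
and for the shaping component $\min\{I(W;G_i),I(W;G_i)\}=I(W;G_i)$. Hence $M$ decomposes exactly, and this observation is the key step. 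An alternative route writes $M=\min\{L_0,L_1\}$ and notes that in $L_{\alpha,G_i}$ the $\alpha$-dependence vanishes. Then $\min_\alpha$ of $\lambda L_{\alpha,T}+\text{const}$ is $\lambda M_T+\text{const}$, which gives the same conclusion.

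For \eqref{eq:common-deterministic} it remains to bound $M_{G}$ when both receivers see a deterministic $G=g(\widetilde X)$. We have
\[
M_G=I(W;G)+I(U;G|W)+I(V;G|W)-I(U;V|W).
\]
Using $I(U;G|W)+I(V;G|W)-I(U;V|W)\le I(U,V;G|W)-I(U;V|W,G)$, which is the inequality $I(U;V|W)\ge I(U;V|W,G)-I(U;G|W)-I(V;G|W)+I(U,V;G|W)$, i.e. $I(U;V|W)+I(U;V;G|W)$-type bookkeeping, I obtain
\[
M_G\le I(W;G)+I(U,V;G|W)-I(U;V|W,G)=I(U,V,W;G)-I(U;V|W,G)\le H(G)-I(U;V|W,G).
\]
The inequality used is equivalent to $I(U;G|W)+I(V;G|W)-I(U,V;G|W)\le I(U;V|W)-I(U;V|W,G)$, and both sides equal the conditional interaction information $I(U;V;G|W)$, so it actually holds with equality. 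Combining this with the decomposition, and noting that $H(G_i)$ is computed under $\pi$ as in \eqref{eq:Gi-psi}, yields \eqref{eq:common-deterministic}.

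I expect the only delicate point to be justifying the exchange of $\min$ and mixture, which the identical-output structure resolves.
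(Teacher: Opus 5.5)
Your proposal is correct and follows essentially the same route as the paper. You condition on the revealed state $Q$ and use its independence from $(W,U,V,\widetilde X)$ to split each mutual-information term into its $\lambda$- and $\delta_i$-weighted components, pull the common shaping term out of the $\min$, and bound $M_G$ via $I(U;G|W)+I(V;G|W)-I(U;V|W)=I(U,V;G|W)-I(U;V|W,G)$, which gives $M_G=I(W,U,V;G)-I(U;V|W,G)\le H(G)-I(U;V|W,G)$.

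One small slip: in your parenthetical ``$I(U;V|W)\ge I(U;V|W,G)-I(U;G|W)-I(V;G|W)+I(U,V;G|W)$'' the signs of the two middle terms are flipped. As written it reduces to $I(U;V|W)\ge I(U;V|W,G)$, which is false in general. Your subsequent restatement as an equality of conditional interaction informations is the correct justification.
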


The proof can be found in Appendix \ref{app:proof_channel_shaper}. 

Now let $f$ be any finite concave function satisfying $M_T(P_X)\le f(p)$ for any $p\in\Delta(\cX)$, and let $p^*$ be a point at which we want to control the one-letter optimizer.  We can apply Theorem~\ref{thm:gradient_shaping} to $f$ at $p^*$, choose the reference coordinate $r$ and the weights $\lambda_L,\delta_{i,L}$ given there, and form the receiver-revealed mixture with $\mathcal I=\operatorname{supp}(p^*)\setminus\{r\}$.

For any distribution $P_{WUVX}$ on the enlarged channel whose marginal $P_X =p$, Proposition~\ref{prop:channel-shaper} and \eqref{eq:Gi-psi} give
\begin{align}
M_{\widetilde T_L}(P_{UVWX})
&\le
\lambda_L M_T(P_X)
+
\sum_{i\neq r}\delta_{i,L}
\bigl[\hbin(p_i)+p_iL\bigr]\nonumber\\
&\le
\lambda_L f(p)
+
\sum_{i\neq r}\delta_{i,L}
\bigl[\hbin(p_i)+p_iL\bigr]
=
\Phi_L(p).
\label{eq:channel-Phi}
\end{align}
Theorem~\ref{thm:gradient_shaping} makes $p^*$ a global maximizer of $\Phi_L$, hence our channel construction prevents the single-letter Marton's sum-rate from exceeding $\Phi_L(p^*)$. 
The remaining question is whether this modification preserves a multiletter rate advantage that was originally available at $p^*$. 
We answer this question affirmatively by the following theorem.

\begin{theorem}[Constraint removal]
\label{thm:constraint-removal}
Let $T$ be a finite broadcast channel.  
Let $f:\Delta(\cX)\to\mathbb R$ be finite and concave satisfying $M_T(P_X)\le f(p)$ for any $p\in\Delta(\cX)$. 
Suppose there exist an input distribution $p^*$ for $T$ and a joint distribution $P^*_{UVWX_1X_2}$ for $T^{\otimes2}$, with input marginal $q^*_{X_1X_2}$, such that for any $x$, 
\begin{equation}
p^*(x)
=
\frac12
\bigl(
q^*_{X_1}(x)+q^*_{X_2}(x)
\bigr), 
\label{eq:average-marginal}
\end{equation}
and $M_*^{(2)}:=M_{T^{\otimes2}}(P^*)>2f(p^*)$, then there exists a finite broadcast channel $T'$ with no input constraint such that
\begin{equation*}
M_{T'^{\otimes2}}>2M_{T'}.
\end{equation*}
In particular, since $p\mapsto M_T(P_X)$ is concave, if $M_T(p^*)<\frac12M_{T^{\otimes2}}(q^*_{X_1X_2})$, then the same conclusion holds.
\end{theorem}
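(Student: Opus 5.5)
We take $T'=\widetilde T_L$, the receiver-revealed mixture built above from Theorem~\ref{thm:gradient_shaping} applied to $f$ at $p^*$ with $\mathcal I=\operatorname{supp}(p^*)\setminus\{r\}$ and $N$ large, so $L=\log N$. If $p^*$ is not in the relative interior of $\Delta(\cX)$, we first restrict the input alphabet of $T$ to $\operatorname{supp}(p^*)$. This restriction can only lower $M_T(P_X)$ on the face, and $q^*$ is supported on that face because of \eqref{eq:average-marginal}. The argument then has two halves: an upper bound on the one-letter quantity $M_{\widetilde T_L}$ and a lower bound on the two-letter quantity $M_{\widetilde T_L^{\otimes 2}}$.

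\emph{Upper bound.} By \eqref{eq:channel-Phi}, every $P_{WUV\widetilde X}$ with $X$-marginal $p$ satisfies $M_{\widetilde T_L}\le\Phi_L(p)\le\Phi_L(p^*)$, since Theorem~\ref{thm:gradient_shaping} makes $p^*$ a global maximizer. Hence
\[
2M_{\widetilde T_L}\le 2\lambda_L f(p^*)+2\sum_{i\neq r}\delta_{i,L}\bigl[\hbin(p^*_i)+p^*_iL\bigr].
\]

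\emph{Lower bound.} We build a two-letter scheme. Take $P^*$ on $(U,V,W,X_1,X_2)$ and let $K_1,K_2$ be i.i.d.\ uniform on $[N]$, independent of everything else. Then put the $K$'s into the common auxiliary: $W'=(W,K_1,K_2)$, keeping $U,V$ unchanged. Conditioning on the revealed states $(Q_1,Q_2)$ and splitting into the four state combinations, Proposition~\ref{prop:channel-shaper} (in its product form) gives three kinds of terms.
\begin{itemize}
\item When $Q_1=Q_2=0$ (probability $\lambda_L^2$), we recover the $T^{\otimes2}$ terms, whose contribution is at least $\lambda_L^2 M^{(2)}_*$.
\item When exactly one state is zero, the mixed terms are bounded below by the single-letter $T$-part, which is nonnegative after the usual choice of $W$, plus the deterministic part.
\item Each deterministic coordinate $G_i(X_j,K_j)$, with $K_j$ uniform, contributes through the common message at least $H(G_i(X_j,K_j))-O(1)$. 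In particular it contributes $q^*_{X_j}(i)L$ plus bounded terms, because $I(W';G_i)\ge H(G_i\mid X_j)$ and $H(G_i\mid X_j)=q^*_{X_j}(i)\log N$ when $K_j$ is uniform.
\end{itemize}

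The main step, and the main obstacle, is comparing orders of magnitude in $L$. We have $\delta_{i,L}=\Theta(1/L)$, so every term $\delta_{i,L}p_iL$ is $\Theta(1)$ and must match exactly on both sides. By \eqref{eq:average-marginal}, summing over the two letters gives $\sum_j\delta_{i,L}q^*_{X_j}(i)L=2\delta_{i,L}p^*_iL$, so the $L$-linear terms cancel. Two further issues arise.
\begin{itemize}
\item The $W$ term is a minimum over receivers, and the deterministic outputs are common to both receivers. We must therefore check that placing the $K$'s in $W'$ does not cost the original scheme anything at the min. We handle this by working with $L_\alpha$ for both $\alpha=0$ and $\alpha=1$, where the mixture identity holds exactly.
\item Every remaining discrepancy must be $O(1/L)$ in total. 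This covers the binary-entropy parts $\hbin$ versus $H(1_{X_j=i})$ (concavity goes the right way here, or else the loss is $O(\delta)$), the cross terms of order $\delta^2L$, and the fact that $\lambda_L^2M^{(2)}_*$ appears rather than $\lambda_L M^{(2)}_*$ (the difference is $O(1/L)$).
\end{itemize}

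Since $\lambda_L\to1$, the one-letter upper bound behaves like $2f(p^*)+C+o(1)$, where $C$ is the common $L$-dependent shaping contribution. The two-letter lower bound behaves like $M^{(2)}_*+C-o(1)$. Taking $N$ large therefore yields $M_{\widetilde T_L^{\otimes2}}>2M_{\widetilde T_L}$.

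The final sentence of the theorem follows by taking $f(p)=M_T(p)$, which is concave, and $P^*$ a maximizer attaining $M_{T^{\otimes2}}(q^*_{X_1X_2})$.
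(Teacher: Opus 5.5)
Your proposal follows the paper's proof essentially step for step. It uses the same restriction to $\operatorname{supp}(p^*)$ and the same enlarged channel $\widetilde T_L$, the same one-letter bound via \eqref{eq:channel-Phi}, the same two-letter scheme $\widetilde W=(W,K_1,K_2)$ analysed by conditioning on the selector pair, and the same exact cancellation of the $L$-linear label terms through the average-marginal identity, with $\lambda_L\to1$, $\delta_{i,L}\to0$ leaving $M_*^{(2)}-2f(p^*)>0$.

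One correction: on the non-base selector pairs, the leftover base-channel/private part need not be nonnegative. The term $-I(U;V|\widetilde W)=-I_{P^*}(U;V|W)=:-C$ is present on every pair, and with $\widetilde W$ fixed there is no ``usual choice of $W$'' to appeal to. As the paper does, discard only the genuinely nonnegative mutual-information terms and keep $-C$; this costs $(1-\lambda_L^2)C=O(1/L)$, which your $O(1/L)$ bookkeeping absorbs.
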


The proof of Theorem \ref{thm:constraint-removal} can be found in Appendix \ref{app:constraint-removal}. 
Note $1-\lambda_L=O(L^{-1})$ and $\delta_{i,L}=O(L^{-1})$, so the shaping components are used with vanishing probability, while $\delta_{i,L}A_i(L)=\lambda_L(g_r-g_i)$ remains finite and changes the supporting slope by the required amount. 
Theorem~\ref{thm:gradient_shaping} itself is independent of Marton's inner bound and may be of independent interest for other information theory problems.

\section{Sub-optimality of Marton's Inner Bound}
\label{sec:counterexamples}

We now formally prove the strict sub-optimality of the Marton's inner bound, i.e., a counterexample to the Marton optimality conjecture. 
We first present a ternary ``base channel'' with a strict two-letter gain under a fixed input distribution. 
This channel instance was searched by computer assistance, and then we apply the techniques developed in the previous section to construct an unconstrained counterexample.

\subsection{A Ternary Base Channel Violating the Markovity Conjecture}
Consider a ternary-input ternary-output broadcast channel with $\cX=\cY=\cZ=\{0,1,2\}$. Its two receiver marginals are given, to the stated numerical precision, by
\begin{equation}
T_Y=
\begin{pmatrix}
0.385361043864&0.358548005177&0.256090950959\\
0.790302113033&0.091025394735&0.118672492232\\
0.113168779849&0.520687239901&0.366143980250
\end{pmatrix},
\label{eq:TYseed}
\end{equation}
and
\begin{equation}
T_Z=
\begin{pmatrix}
0.390396602690&0.207292393713&0.402311003597\\
0.293627666338&0.040944965497&0.665427368165\\
0.396880111705&0.507497881167&0.095622007128
\end{pmatrix}.
\label{eq:TZseed}
\end{equation} 
Consider the input distribution
\begin{equation}
p^*=(0.704504835923,0.172563088231,0.122932075846).
\label{eq:pstar-seed}
\end{equation}

We first show that this channel provides a counterexample to the Markovity conjecture proposed in \cite[Conjecture 2]{gohari2025conjecture}. The conjecture asserts that, for every broadcast channel $T_{YZ|X}$, every $\alpha\in(0,1)$, and every sequence $\{a_x\}\in \mathbb R^{|\cX|},x\in \cX$, there exists a distribution $P_{UVX}$ attaining $F_T(\{a_x\},\alpha)$ such that $U\rightarrow X \rightarrow V$, or, equivalently, such that $I(U;V|X)=0$. 

Set
\begin{equation}
\label{eq:base-markovity-alpha}
    \alpha_0=0.546388044486, \qquad a=(0,-0.393210549796,-0.149290191142). 
\end{equation}
Numerical optimization gives
\begin{align}
F_T(\{a_x\},\alpha)
\simeq
-1.06471340944277385029433493729151639713942682792468927190596.
\end{align}
The optimizing distribution is specified by
\begin{equation}
X=f(U,V)
=
\begin{pmatrix}
0&1\\
2&0
\end{pmatrix}, \qquad
P_{UV} 
=
\begin{pmatrix}
0.013502132210&0.172563088231\\
0.122932075846&0.691002703713
\end{pmatrix}.
\label{eq:ase-markovity-PUV}
\end{equation}
Here, the rows and columns correspond to the values of \(U\) and \(V\),
respectively. The induced input distribution is precisely \(p^*\) in
\eqref{eq:pstar-seed}. Moreover, the deterministic mapping \(f\) is
non-rectangular, and the resulting distribution does not satisfy the
Markov chain \(U\to X\to V\). Indeed,
\[
I(U;V|X)\simeq 0.06676808506814671821263044747>0.
\]

In contrast, restricting the optimization to distributions satisfying
\(U\to X\to V\) yields
\begin{equation}
\max_{P_{UVX}:\,U\rightarrow X \rightarrow V}
G_T(P_{UVX},\{a_x\},\alpha) <-1.0648<F_T(\{a_x\},\alpha).
\label{eq:markovity-separation}
\end{equation}
The gap between the unrestricted and Markovian optima is approximately
\begin{equation}
8.74201195024\times 10^{-5}\ \text{nats}.
\end{equation}

Consequently, no distribution satisfying \(U\to X\to V\) can attain the
unrestricted optimum, and the channel in
\eqref{eq:TYseed}--\eqref{eq:TZseed} violates the Markovity conjecture.

For additional counterexamples obtained independently at an early stage of this investigation, see \cite{liu2026counterexamples}.

\subsection{A Fixed-Input Counterexample to Marton Optimality}
We next show that the ternary broadcast channel introduced above provides a fixed-input counterexample to the optimality of the one-letter Marton's inner bound. 

We use the same parameters as in \eqref{eq:base-markovity-alpha}, namely,
\begin{equation}
\alpha_0=0.546388044486,
\qquad
\{a_x\}=(0,-0.393210549796,-0.149290191142). 
\label{eq:alphaa}
\end{equation}

Following the formulation in \cite{gohari2025conjecture}, we first derive an upper bound on $M_T(P_X)$. Define $$B_T(\alpha,\{a_x\}) :=\max_{q(u,v,x)}-\alpha H(Y)-\bar \alpha H(Z)+I(U;Y)+I(V;Z)-I(U;V)+\sum_{x\in\cX}q(x)a_x.$$

For any prescribed input distribution \(P_X\), define
\begin{equation}
U_B^{\alpha,\{a_x\}}(P_X)
:=
\alpha H_p(Y)+\bar{\alpha}H_p(Z)
+B_T(\alpha,\{a_x\})
-\sum_{x\in\cX}p(x)a_x. 
\label{eq:UB-definition}
\end{equation}

We claim that
\begin{equation}
M_T(P_X)
\leq L_{\alpha,T}(P_X)
\leq U_B^{\alpha,\{a_x\}}(P_X).
\label{eq:UB-claim}
\end{equation}

To establish the second inequality, let $P_{WUV|X}$ attain $L_{\alpha,T}(P_X)$. Then
\begin{align*}
    L_{\alpha,T}(P_X)=&L_{\alpha,T}(P_{WUVX})\\
    =&(1-\alpha)I(W;Y)+\alpha I(W;Z)+I(U;Y|W)+I(V;Z|W)-I(U;V|W)\\
    =&\alpha H(Y)+\bar\alpha H(Z)-\sum_{x\in\cX}p(x)a_x+\sum_{W=w} P(W=w)[-\alpha H(Y|W=w)-\bar \alpha H(Z|W=w)\\
    &+I(U;Y|W=w)+I(V;Z|W=w)-I(U;V|W=w) +\sum_{x\in\cX}p(x|W=w)a_x]\\
    \leq &\alpha H(Y)+\bar\alpha H(Z)-\sum_{x\in\cX}p(x)a_x+\bigg(\max_{q(u,v,x)}-\alpha H(Y)-\bar \alpha H(Z)+I(U;Y)+I(V;Z)-I(U;V)+\sum_{x\in\cX}q(x)a_x\bigg) \\
    = &\alpha H(Y)+\bar\alpha H(Z)+B_T(\alpha,\{a_x\})-\sum_{x\in\cX}p(x)a_x\\
    =&U_B^{\alpha,\{a_x\}}(P_X),
\end{align*}
where the inequality follows because conditioning on $W=w$ induces an input distribution $p(x|w)$, and the resulting conditional objective is an average over $w$. This average cannot exceed the maximum of the corresponding unconditional objective over all admissible input distributions and auxiliary random variables.

A numerical evaluation shows that,
\begin{equation}
U_B^{\alpha,\{a_x\}}(p^*)\le 0.0916508833913554746710, 
\label{eq:Ucert}
\end{equation}

On the two-letter side, we can find an explicit rational input distribution $(P^*)_{WUVX_1X_2}$ for $T^{\otimes2}$ satisfying
\begin{equation}
(P^*)_{X_1}=(P^*)_{X_2}=p^*,
\label{eq:same-marginals}
\end{equation}
and

\begin{equation}
M_{T^{\otimes2}}(P^*)
\geq
0.1833045943206552249020> 2U_B^{\alpha,\{a_x\}}(p^*)
\geq
2M_T(p^*).
\label{eq:Vcert}
\end{equation}

We next describe the two-letter distribution $P^*$ explicitly. 
For $q\in\{0,1\}$, let $P_q=P_{W_cUVX_1X_2\mid Q=q}$ be the distribution specified by the table in Appendix~\ref{app:twoletter-component-laws}. 
The alphabets satisfy $|\mathcal W_c|=2$ and $|\mathcal U|=|\mathcal V|=4$.
Let $Q$ be binary with $P_Q(1)=1-\mu$ and $P_Q(0)=\mu\approx 0.28950945961978169531$, and, conditional on $Q=q$, draw $(W_c,U,V,X_1,X_2)$ according to $P_q$.  Moreover, take $W=(Q,W_c)$. 
This defines the joint distribution of $P^*_{WUVX_1X_2}$ denoted by $P^*$ above and its input marginal: 
\begin{equation}
P^*_{X_1X_2}
\approx
\begin{pmatrix}
0.504171085044430 & 0.118736798891440 & 0.081596951987130\\
0.118736798891440 & 0.030668142751315 & 0.023158146588245\\
0.081596951987130 & 0.023158146588245 & 0.018176977270626
\end{pmatrix}.
\label{eq:qstar-input}
\end{equation}
More details on the numerical calculation of $M_{T^{\otimes2}}(P^*)$ can also be found in Appendix~\ref{app:twoletter-component-laws}.  

Consequently, the gap is at least  $2.82753794427556\times10^{-6} >0$. 
Therefore the base channel has a strict fixed-input two-letter gain at $p^*$. 
It is not difficult to verify the numerical gain once the channel distribution and $p^*$ are fixed. 
It also yields a cost-constrained counterexample to the tensorization (see \cite[Conjecture 1]{gohari2025conjecture}), and hence to the optimality of the one-letter Marton inner bound.
 
\subsection{From the base channel to an unconstrained counterexample}

The fixed-input-distribution gain above is exactly the situation addressed by Theorem~\ref{thm:constraint-removal}.  
Indeed, \eqref{eq:same-marginals} implies
\[
p^*
=
\frac12\bigl((P^*)_{X_1}+(P^*)_{X_2}\bigr),
\]
while \eqref{eq:Ucert} and \eqref{eq:Vcert} give
\[
M_{T^{\otimes2}}(P^*)>2U_B^{\alpha,\{a_x\}}(p^*).
\]
Therefore Theorem~\ref{thm:constraint-removal}, applied with the concave envelope $f=U_B^{\alpha,\{a_x\}}$, directly gives a finite unconstrained broadcast channel $\widetilde T$ satisfying
\begin{equation}
M_{\widetilde T^{\otimes2}}>2M_{\widetilde T}.
\label{eq:main-subopt}
\end{equation}
The existence argument does not require fixing a particular label size: in the proof of Theorem~\ref{thm:constraint-removal}, one lets $L=\log N$ tend to infinity.  The shaping probabilities $\delta_{i,L}$ then tend to zero, while the original positive gap survives.

\begin{theorem}[Sub-optimality of Marton's inner bound]
\label{thm:main-sub-optimality}
There exists a finite two-receiver discrete memoryless broadcast channel $\widetilde T$ for which
\[
M_{\widetilde T^{\otimes2}}>2M_{\widetilde T}.
\]
Consequently, the complete one-letter Marton region is strictly contained in the capacity region of $\widetilde T$.
\end{theorem}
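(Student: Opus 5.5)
The plan is to combine the fixed-input certificate of the previous subsection with Theorem~\ref{thm:constraint-removal}, and then to pass from a sum-rate gap to strict containment of regions.

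\textbf{Step 1: existence of $\widetilde T$.} Take the ternary base channel $T$ of \eqref{eq:TYseed}--\eqref{eq:TZseed}, the input distribution $p^*$ of \eqref{eq:pstar-seed}, and the parameters $(\alpha_0,\{a_x\})$ of \eqref{eq:alphaa}. Set $f:=U_B^{\alpha_0,\{a_x\}}$. I will check the three hypotheses of Theorem~\ref{thm:constraint-removal} in turn.
\begin{itemize}
\item $f$ is finite and concave on $\Delta(\cX)$. This holds because $p\mapsto \alpha H_p(Y)+\bar\alpha H_p(Z)$ is concave (output entropy is concave in the input law), $B_T(\alpha_0,\{a_x\})$ is a finite constant (a maximum of a continuous function over a compact set), and $-\sum_x p(x)a_x$ is linear.
\item $M_T(P_X)\le f(P_X)$ for every $P_X$. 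This is \eqref{eq:UB-claim}.
\item The two-letter distribution $P^*$ satisfies the averaging condition \eqref{eq:average-marginal}, by \eqref{eq:same-marginals}, and $M_{T^{\otimes2}}(P^*)>2f(p^*)$, by the certified bounds \eqref{eq:Ucert}--\eqref{eq:Vcert}.
\end{itemize}
Theorem~\ref{thm:constraint-removal} then yields a finite channel $\widetilde T$ (concretely, the receiver-revealed mixture $\widetilde T_L$ for some large $N$) with $M_{\widetilde T^{\otimes2}}>2M_{\widetilde T}$.

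\textbf{Step 2: from the sum-rate gap to strict containment.} Let $P^\dagger$ be a distribution attaining (or nearly attaining) $M_{\widetilde T^{\otimes2}}$.
\begin{itemize}
\item Apply Marton's scheme (Theorem~\ref{thm:marton}) to the super-letter channel $\widetilde T^{\otimes2}$ with the distribution $P^\dagger$. Any rate pair $(R_1,R_2)$ with $R_0=0$ and $R_1+R_2=M_{\widetilde T^{\otimes2}}(P^\dagger)$ that satisfies the individual constraints is achievable over blocks of length $2n$.
\item Normalizing by $2$, the pair $(R_1/2,R_2/2)$ is achievable for $\widetilde T$, so it lies in the capacity region. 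Its sum $R_1/2+R_2/2$ exceeds $M_{\widetilde T}$.
\item Every point of $\mathcal M(\widetilde T)$ with $R_0=0$ has $R_1+R_2\le M_{\widetilde T}$, by \eqref{eq:MTlaw} and \eqref{eq:eqnM4}. Hence this capacity-region point is not in $\mathcal M(\widetilde T)$.
\end{itemize}
One technical point is needed: the maximum sum-rate over the pentagon cut by \eqref{eqnM2}--\eqref{eqnM4} equals the sum constraint \eqref{eqnM4} with the $\min$ term included. It is standard (and used implicitly in the paper's definition of $M_T$) that $\max R_1+R_2$ over Marton's region equals $\max_P M_T(P)$. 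To handle the case where \eqref{eqnM2} or \eqref{eqnM3} binds, I would invoke the standard fact that the Marton sum-rate equals $\max_P M_T(P)$ once $W$ is allowed to carry part of the private messages. I would cite this fact rather than reprove it.

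\textbf{Main obstacle.} Given the earlier results, the substance sits in Theorem~\ref{thm:constraint-removal}, which is assumed here. Within this proof, the delicate points are therefore the exact-arithmetic certification of \eqref{eq:Ucert}--\eqref{eq:Vcert}, which bounds $B_T$ from above over all $(U,V,X)$, and the concavity check on $f$. For $B_T$, I would first use the cardinality reduction $|\mathcal U|+|\mathcal V|\le|\cX|+1$ so that the optimization runs over a finite-dimensional compact set, and then use interval arithmetic over a covering of that set.
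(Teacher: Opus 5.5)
Your proposal is correct and follows the paper's argument: the paper obtains $\widetilde T$ by applying Theorem~\ref{thm:constraint-removal} with $f=U_B^{\alpha,\{a_x\}}$, using \eqref{eq:same-marginals}, \eqref{eq:Ucert} and \eqref{eq:Vcert}, and it gets strict containment from the normalized two-letter Marton scheme exactly as in your Step~2. The external fact you planned to cite in Step~2 is not needed: when $R_0=0$, the right-hand sides of \eqref{eqnM2} and \eqref{eqnM3} sum to at least the right-hand side of \eqref{eqnM4}, so the sum bound is attained at every fixed $P$.
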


\subsection{Numerical Example}

For completeness and numerical reproducibility, we next give one fully explicit finite realization of the channel guaranteed by Theorem~\ref{thm:constraint-removal}.  
We choose
\begin{equation}
M=2,000,000,
\qquad
N=2^M,
\qquad
L=M\log2.
\label{eq:explicit-N}
\end{equation}
The enlarged input alphabet is
\begin{equation}
\widetilde\cX=\{0,1,2\}\times[N],
\qquad
\widetilde X=(X,K).
\label{eq:explicit-input}
\end{equation}
Since the largest coordinate of the certified supporting gradient is the coordinate corresponding to $x=2$, the two shaping components act on $x=0$ and $x=1$.  We use the exact rational probabilities
\begin{equation}
\lambda=\frac{99999977198527428838942}{10^{23}},
\qquad
\delta_0=\frac{20970790337007940}{10^{23}},
\qquad
\delta_1=\frac{1830682234153118}{10^{23}},
\label{eq:explicit-weights}
\end{equation}
and define
\begin{equation}
G_j(x,k)
=
\begin{cases}
k,&x=j,\\
e_j,&x\neq j,
\end{cases}
\qquad j=0,1,
\label{eq:explicit-g}
\end{equation}
where $e_0,e_1\notin[N]$ are erasure symbols.

The channel $\widetilde T$ is the receiver-revealed mixture of Proposition~\ref{prop:channel-shaper}: given $(x,k)$, with probability $\lambda$ it uses the base channel $T$ on $x$ and ignores $k$; with probability $\delta_j$, $j=0,1$, both receivers observe the common deterministic output $G_j(x,k)$.  To specify a joint broadcast channel in the base component, we use the conditionally independent coupling of the two marginals.  Equivalently, the only nonzero joint transitions are
\begin{align}
\widetilde T((0,y),(0,z)|(x,k))
&=\lambda T_Y(y|x)T_Z(z|x),
\label{eq:kernelB}\\
\widetilde T((j+1,G_j(x,k)),(j+1,G_j(x,k))|(x,k))
&=\delta_j,
\qquad j=0,1.
\label{eq:kernelS}
\end{align} 
The tags $0,1,2$ are part of both receiver outputs, so the selected component is revealed to both receivers.  This is an ordinary finite memoryless broadcast channel with no input cost, fixed-composition constraint, or encoder-side restriction.

We now verify the one-letter side using exactly the gradient-shaping mechanism of the previous section. 
We have
\begin{equation}
b=(0,b_1,b_2),
\qquad
b_1=0.26533808894270830145,
\qquad
b_2=0.29071676800873404161,
\label{eq:explicit-b}
\end{equation}
together with a scalar $\nu$ and a residual vector $\rho$ such that
\begin{equation}
\nabla U_B(p^*)
=
b+\nu\mathbf 1+\rho,
\qquad
\max_i\rho_i-\min_i\rho_i
<
4.692109048907095\times10^{-22}.
\label{eq:base-gradient}
\end{equation}
Since $b_2>b_1>b_0$, the reference coordinate in Theorem~\ref{thm:gradient_shaping} is $r=2$.  The rational weights in \eqref{eq:explicit-weights} are chosen as close rational approximations to the corresponding gradient-shaping weights.  A $320$-bit outward-rounded calculation verifies that the complete nonconstant gradient range of
\begin{equation}
\Phi(p)
:=
\lambda U_B(p)
+
\delta_0[\hbin(p_0)+p_0L]
+
\delta_1[\hbin(p_1)+p_1L]
\label{eq:explicit-Phi}
\end{equation}
at $p^*$ is at most
\begin{equation}
\varepsilon_\nabla
<
1.846843880121467\times10^{-17}.
\label{eq:graderr}
\end{equation}
By concavity,
\[
\Phi(p)\le \Phi(p^*)+\varepsilon_\nabla
\qquad\forall p\in\Delta(\cX).
\]
Using \eqref{eq:Ucert}, Proposition~\ref{prop:channel-shaper}, and directed rounding therefore gives
\begin{equation}
M_{\widetilde T}
\le
\lambda U
+
\delta_0[\hbin(p_0^*)+p_0^*L]
+
\delta_1[\hbin(p_1^*)+p_1^*L]
+
\varepsilon_\nabla
\le
0.30084186609915448.
\label{eq:explicit-U}
\end{equation}

For the two-letter lower bound, use the exact rational base-channel distribution $P^*$, draw $K_1,K_2$ independently and uniformly on $[N]$, independently of all variables in $P^*$, and put the labels into the common auxiliary:
\[
\widetilde W=(W,K_1,K_2),
\qquad
\widetilde X_i=(X_i,K_i),
\quad i=1,2.
\]
On the base-base selector pair, every affine Marton form is at least $M_*^{(2)}$.  If coordinate $i$ uses shaping component $j$, the output reveals $K_i$ on $\{X_i=j\}$ and hence contributes exactly $p_j^*L$ to both common-message terms.  On all non-base selector pairs, we discard the remaining nonnegative conditional mutual-information terms and use
\[
I(U;V|\widetilde W)\le\log4,
\]
since the base-channel witness has $|U|=|V|=4$.  Therefore, uniformly for every $\alpha\in[0,1]$,
\begin{equation}
L_{\alpha,\widetilde T^{\otimes2}}
\ge
\lambda^2M_*^{(2)}
+
2L(\delta_0p_0^*+\delta_1p_1^*)
-
(1-\lambda^2)\log4.
\label{eq:explicit-two-affine}
\end{equation}
Taking the minimum over $\alpha$ gives
\begin{equation}
M_{\widetilde T^{\otimes2}}
\ge
\lambda^2M_*^{(2)}
+
2L(\delta_0p_0^*+\delta_1p_1^*)
-
(1-\lambda^2)\log4
\ge
0.60168561432394163.
\label{eq:explicit-two-lower}
\end{equation}
Combining \eqref{eq:explicit-U} and \eqref{eq:explicit-two-lower} yields
\begin{equation}
M_{\widetilde T^{\otimes2}}-2M_{\widetilde T}
>
1.8821256327185490\times10^{-6}.
\label{eq:explicit-gap}
\end{equation} 

Consequently, the one-letter Marton region of this channel instance is strictly contained in its capacity region. 

\section{Concluding Remarks}

In this paper, we establish the strict sub-optimality of Marton's one-letter inner bound for the two-receiver discrete memoryless broadcast channel.  
Starting from a certified fixed-input two-letter gain, we introduce two general techniques, \emph{gradient shaping} and \emph{constraint removal}, to convert the constrained separation into an unconstrained one. 
Applying this framework to the ternary example yields a finite broadcast channel satisfying $M_{T^{\otimes2}}>2M_T$, thereby proving that the complete one-letter Marton region can be strictly sub-optimal. 
We also used the same channel instance to show the Markovity conjecture in \cite{gohari2025conjecture} is untrue as well. 
Beyond the applications in this paper, the gradient-shaping mechanism may provide a useful general device for converting fixed-input multiletter separations into unconstrained ones in other information theory problems. 
It remains interesting to see if Marton's inner bound is tight for binary input broadcast channels.

\section{Acknowledgement and Bibliographic Remarks}

The authors would like to thank Prof. Chandra Nair and Prof. Amin Gohari for their valuable suggestions on the counterexample search, their verification of the technical non-numerical aspects of the paper, and their generous assistance with the paper's organization and writing. 

The numerical search relied heavily on the assistance of AI models, which is acknowledged together with the bibliographic records as follows. 

Chandra Nair requested Yanxiao Liu to search for counterexamples to the additivity conjecture (\cite[Conjecture~1]{gohari2025conjecture}) using AI during their meeting at ISIT 2026.
With assistance from GPT-5.6 Sol, Yanxiao Liu first discovered a counterexample to the Markovity conjecture (\cite[Conjecture~2]{gohari2025conjecture}).
Independently, by the method of elimination geometry \cite{Huang2026eg}, Mian Huang also discovered a counterexample to the Markovity conjecture, assisted by GPT-5.6 Sol. 
The two authors were then put in touch with each other after they both privately communicated their results to Amin Gohari, Yi Liu, and Chandra Nair, and the two counterexamples were summarized in \cite{liu2026counterexamples}. 
Yi Liu then informed them about the local tensorization test~\cite{Nair2020}.
Mian Huang found a counterexample to the local tensorization test for a broadcast channel with both components being ternary, motivated by the counterexample to the failed Markovity conjecture, assisted by  Claude Fable 5.
This was then developed into a counterexample to the additivity conjecture.
Amin Gohari and Chandra Nair then suggested that they search for a counterexample while constraining themselves to a fixed input distribution.
Such an example was found by Mian Huang, assisted by Claude Fable 5 and Opus 5. 
Yanxiao Liu then modified this example to construct an unconstrained counterexample to Marton's region, assisted by GPT-5.6 Sol.

\appendices

\section{Exact two-letter component distributions}
\label{app:twoletter-component-laws}
The two rational component distributions used for $P_q=P_{W_cUVX_1X_2\mid Q=q}$ are specified below.  Each row gives a nonzero value of
$P_q(w_c,u,v,x_1,x_2)$; the listed numerator is divided by $10^{12}$, and every entry not listed is zero.  The pair $x_1x_2$ is written as a two-digit string. 
\begingroup
\normalsize
\setlength{\tabcolsep}{6pt}
\begin{longtable}{ccccc r}
\hline
$q$ & $w_c$ & $u$ & $v$ & $x_1x_2$ & numerator\\
\hline
\endfirsthead
\hline
$q$ & $w_c$ & $u$ & $v$ & $x_1x_2$ & numerator\\
\hline
\endhead
0&0&0&0&\texttt{11}&$2$\\
0&0&0&1&\texttt{01}&$1{,}384{,}478{,}680$\\
0&0&0&2&\texttt{10}&$1{,}384{,}478{,}680$\\
0&0&0&3&\texttt{11}&$15{,}902{,}736{,}192$\\
0&0&1&0&\texttt{10}&$16$\\
0&0&1&1&\texttt{00}&$10{,}209{,}427{,}117$\\
0&0&1&2&\texttt{12}&$23{,}193{,}325{,}149$\\
0&0&1&3&\texttt{10}&$117{,}244{,}957{,}226$\\
0&0&2&0&\texttt{01}&$16$\\
0&0&2&1&\texttt{21}&$23{,}193{,}325{,}149$\\
0&0&2&2&\texttt{00}&$10{,}209{,}427{,}117$\\
0&0&2&3&\texttt{01}&$117{,}244{,}957{,}226$\\
0&0&3&0&\texttt{00}&$55$\\
0&0&3&1&\texttt{20}&$81{,}553{,}873{,}546$\\
0&0&3&2&\texttt{02}&$81{,}553{,}873{,}546$\\
0&0&3&3&\texttt{00}&$411{,}998{,}751{,}962$\\
0&1&0&0&\texttt{00}&$732{,}352{,}219$\\
0&1&0&3&\texttt{11}&$14{,}837{,}590{,}966$\\
0&1&1&0&\texttt{22}&$42$\\
0&1&1&3&\texttt{00}&$166$\\
0&1&2&0&\texttt{22}&$18$\\
0&1&2&3&\texttt{00}&$70$\\
0&1&3&0&\texttt{22}&$18{,}184{,}877{,}091$\\
0&1&3&3&\texttt{00}&$71{,}171{,}567{,}749$\\
\hline
1&0&0&0&\texttt{11}&$3$\\
1&0&0&1&\texttt{01}&$1{,}422{,}792{,}872$\\
1&0&0&2&\texttt{10}&$1{,}422{,}792{,}872$\\
1&0&0&3&\texttt{11}&$16{,}393{,}806{,}593$\\
1&0&1&0&\texttt{10}&$25$\\
1&0&1&1&\texttt{00}&$10{,}187{,}677{,}513$\\
1&0&1&2&\texttt{12}&$23{,}143{,}812{,}089$\\
1&0&1&3&\texttt{10}&$117{,}357{,}754{,}072$\\
1&0&2&0&\texttt{01}&$25$\\
1&0&2&1&\texttt{21}&$23{,}143{,}812{,}089$\\
1&0&2&2&\texttt{00}&$10{,}187{,}677{,}513$\\
1&0&2&3&\texttt{01}&$117{,}357{,}754{,}072$\\
1&0&3&0&\texttt{00}&$88$\\
1&0&3&1&\texttt{20}&$81{,}614{,}505{,}516$\\
1&0&3&2&\texttt{02}&$81{,}614{,}505{,}516$\\
1&0&3&3&\texttt{00}&$413{,}588{,}568{,}366$\\
1&1&0&0&\texttt{00}&$720{,}151{,}199$\\
1&1&0&2&\texttt{11}&$18$\\
1&1&0&3&\texttt{11}&$14{,}244{,}922{,}559$\\
1&1&1&0&\texttt{22}&$45$\\
1&1&1&3&\texttt{00}&$171$\\
1&1&2&0&\texttt{22}&$26$\\
1&1&2&3&\texttt{00}&$100$\\
1&1&3&0&\texttt{22}&$18{,}173{,}758{,}170$\\
1&1&3&1&\texttt{00}&$1$\\
1&1&3&2&\texttt{00}&$88$\\
1&1&3&3&\texttt{00}&$69{,}425{,}708{,}399$\\
\hline
\end{longtable}
\endgroup 

We also provide more details on then numerical calculation of $M_{T^{\otimes2}}(P^*)$: after defining $q\in\{0,1\}$, we have \begin{align*}
J_q
&:=I_q(U;Y^2\mid W_c)+I_q(V;Z^2\mid W_c)-I_q(U;V\mid W_c),\\
A_q
&:=I_q(W_c;Y^2)+J_q=L_{0,T^{\otimes2}}(P_q),\\
B_q
&:=I_q(W_c;Z^2)+J_q=L_{1,T^{\otimes2}}(P_q).
\end{align*} 
Directed-rounding evaluation of the two exact rational component distributions gives the rigorous lower bounds $A_q\ge\underline A_q$ and $B_q\ge\underline B_q$, where
\begin{equation}
\begin{array}{c|cc}
q&\underline A_q&\underline B_q\\ \hline
0&0.1833138563360490321659&0.1832960014540807003610\\
1&0.1833008202506018554789&0.1833080957271021625141
\end{array}
\label{eq:twoletter-endpoint-bounds}
\end{equation}
The exact rational value of $\mu$ is chosen so that the two mixtures of these certified endpoint lower bounds agree:
\begin{equation}
\mu \underline A_0+(1-\mu)\underline A_1
=
\mu \underline B_0+(1-\mu)\underline B_1
=:
\gamma,
\label{eq:twoletter-equalize}
\end{equation}
where
\begin{equation}
\gamma =0.183304594320655224901767\ldots.
\label{eq:twoletter-V}
\end{equation}
Indeed, using $W=(Q,W_c)$ and the chain rule,
\begin{align}
I(W;Y^2)
&=I(Q;Y^2)+\mu I_0(W_c;Y^2)+(1-\mu)I_1(W_c;Y^2),\label{eq:twoletter-chain-Y}\\
I(W;Z^2)
&=I(Q;Z^2)+\mu I_0(W_c;Z^2)+(1-\mu)I_1(W_c;Z^2).\label{eq:twoletter-chain-Z}
\end{align}
The conditional Marton terms equal $\mu J_0+(1-\mu)J_1$.  Dropping the nonnegative terms $I(Q;Y^2)$ and $I(Q;Z^2)$ therefore gives
\begin{align}
M_{T^{\otimes2}}(P^*)
&\ge
\min\bigl\{
\mu A_0+(1-\mu)A_1,
\mu B_0+(1-\mu)B_1
\bigr\}\nonumber\\
&\ge
\min\bigl\{
\mu \underline A_0+(1-\mu)\underline A_1,
\mu \underline B_0+(1-\mu)\underline B_1
\bigr\}\nonumber\\
&=\gamma.
\label{eq:twoletter-direct-lower}
\end{align}

\section{Properties}

\begin{lemma}[Concavity]
\label{lem:concavity}
For every finite broadcast channel $T$, the function $p\mapsto M_T(P_X)$ is concave on $\Delta(\cX)$.
\end{lemma}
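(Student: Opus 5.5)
The approach is the standard time-sharing argument, absorbing the time-sharing variable into the common auxiliary $W$. Fix $p,p'\in\Delta(\cX)$ and $\theta\in[0,1]$, and let $P_{WUV|X}$ and $P'_{W'U'V'|X}$ attain (or, if only a supremum is available, come within $\epsilon$ of) $M_T(p)$ and $M_T(p')$ respectively. Here $P_{WUVX}$ has $X$-marginal $p$ and $P'_{W'U'V'X}$ has $X$-marginal $p'$.

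\textbf{Construction.} Introduce an independent binary $S$ with $\Pr[S=1]=\theta$. Conditionally on $S=1$, draw $(W,U,V,X)\sim P$; conditionally on $S=0$, draw $(W',U',V',X)\sim P'$. Set $\widetilde W=(S,W_S)$, $\widetilde U=U_S$, $\widetilde V=V_S$. This gives a valid joint law with $(\widetilde W,\widetilde U,\widetilde V)\to X\to(Y,Z)$ and input marginal $\theta p+(1-\theta)p'$.

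\textbf{Private terms.} Because $S$ is a component of $\widetilde W$, each conditional term splits by the value of $S$:
\[
I(\widetilde U;Y|\widetilde W)+I(\widetilde V;Z|\widetilde W)-I(\widetilde U;\widetilde V|\widetilde W)=\theta J+(1-\theta)J',
\]
where $J$ and $J'$ are the corresponding private parts of $M_T$ for $P$ and $P'$.

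\textbf{Common term.} The chain rule gives
\[
I(\widetilde W;Y)=I(S;Y)+\theta I(W;Y|S=1)+(1-\theta)I(W';Y|S=0)\ge \theta I(W;Y)+(1-\theta)I(W';Y),
\]
and the same holds for $Z$. The minimum of two sums is at least the sum of the minima, so
\[
\min\{I(\widetilde W;Y),I(\widetilde W;Z)\}\ge \theta\min\{I(W;Y),I(W;Z)\}+(1-\theta)\min\{I(W';Y),I(W';Z)\}.
\]

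\textbf{Conclusion.} Adding the two pieces gives
\[
M_T(\theta p+(1-\theta)p')\ge M_T(\widetilde P)\ge\theta M_T(p)+(1-\theta)M_T(p')-\epsilon .
\]
Letting $\epsilon\to0$ yields concavity.

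The only step needing care is the $\min$ of the common-message terms, which is handled by the superadditivity of $\min$ together with the nonnegativity of $I(S;\cdot)$. The attainment of the maximum is not an issue: it follows from the finite cardinality bounds of \cite{gohari2012evaluation} and compactness, and in any case the $\epsilon$-argument avoids needing it.
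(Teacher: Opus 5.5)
Your proposal is correct and follows the paper's proof essentially step for step. Both proofs absorb the time-sharing variable into the common auxiliary $\widetilde W$, average the private terms exactly, lower-bound $I(\widetilde W;Y)$ and $I(\widetilde W;Z)$ via the chain rule by dropping $I(S;\cdot)\ge 0$, and then use superadditivity of the minimum; your $\epsilon$-argument is a slightly more careful version of the paper's ``taking suprema'' step.
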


\begin{proof}
Let $p^{(0)},p^{(1)}$ be two input distributions and let $P^{(0)}_{WUVX}$, $P^{(1)}_{WUVX}$ be feasible joint distributions with these marginals.  Let $Q\sim\mathrm{Bernoulli}(\theta)$ select the two distributions, and put $W'=(Q,W_Q)$ while embedding $U_Q,V_Q$ into common alphabets.  The input marginal is $p=\theta p^{(0)}+(1-\theta)p^{(1)}$.  The three conditional terms in \eqref{eq:MTlaw} average exactly.  Moreover
\[
I(W';Y)\ge \theta I(W_0;Y)+(1-\theta)I(W_1;Y),
\]
and the same holds for $Z$.  Hence
\begin{align*}
\min\{I(W';Y),I(W';Z)\}
&\ge \min\bigl\{\theta a_0+(1-\theta)a_1,\theta b_0+(1-\theta)b_1\bigr\}\\
&\ge \theta\min\{a_0,b_0\}+(1-\theta)\min\{a_1,b_1\},
\end{align*}
where $a_q=I(W_q;Y)$ and $b_q=I(W_q;Z)$.  Thus the time-shared Marton value is at least the corresponding convex combination.  Taking suprema proves concavity.
\end{proof}

\section{Proof of Theorem \ref{thm:gradient_shaping}}
\label{app:gradient_shaping}
\begin{proof}
For sufficiently large $L$, every $A_i(L)$ is positive.  Since $g$ is a supergradient of $f$ at $p^*$ and $A_i(L)\vec{e}_i$ is the gradient at $p^*$ of the function $p\mapsto \hbin(p_i)+p_iL$, the vector
\[
\lambda_L g
+
\sum_{i\neq r}\delta_{i,L}A_i(L)\vec{e}_i
\]
is a supergradient of $\Phi_L$ at $p^*$. Here $\vec{e}_i$ denotes the elementary basis vector in the Euclidean space.  Its $r$th coordinate is $\lambda_Lg_r$.  For $i\neq r$, its $i$th coordinate is
\[
\lambda_Lg_i+\delta_{i,L}A_i(L)
=
\lambda_Lg_i+\lambda_L(g_r-g_i)
=
\lambda_Lg_r.
\]
Hence this supergradient is the constant vector $\lambda_Lg_r\mathbf 1$.  For every probability vector $p$,
\[
\Phi_L(p)
\le
\Phi_L(p^*)
+
\lambda_Lg_r\sum_i(p_i-p_i^*)
=
\Phi_L(p^*),
\]
so $p^*$ is a global maximizer.  Finally, $A_i(L)=L+O(1)$, hence
$\delta_{i,L}=O(L^{-1})$ and $\lambda_L=1-O(L^{-1})$.
\end{proof}

\section{Proof of Proposition \ref{prop:channel-shaper}}
\label{app:proof_channel_shaper}
\begin{proof}
For the first claim, write $Y_0=Y$, $Z_0=Z$, and $Y_i=Z_i=G_i$ for $i\in\mathcal I$.  Since $Q$ is independent of $(W,U,V,\widetilde X)$ and is included in both receiver outputs,
\begin{align*}
I(W;\widetilde Y)
&=
\lambda I(W;Y)
+
\sum_{i\in\mathcal I}\delta_i I(W;G_i),\\
I(W;\widetilde Z)
&=
\lambda I(W;Z)
+
\sum_{i\in\mathcal I}\delta_i I(W;G_i).
\end{align*}
The two conditional mutual informations decompose in the same way.  Because every shaping component contributes the same common-information term to the two receivers,
\begin{align*}
\min\{I(W;\widetilde Y),I(W;\widetilde Z)\}
={}&
\lambda\min\{I(W;Y),I(W;Z)\}\\
&+
\sum_{i\in\mathcal I}\delta_i I(W;G_i).
\end{align*}
Finally,
\[
-I(U;V|W)
=
-\left(\lambda+\sum_{i\in\mathcal I}\delta_i\right)I(U;V|W),
\]
so collecting the contribution of each component gives \eqref{eq:revealed-mixture}.  The argument for $L_t$ is identical, without taking the minimum of the two common-message terms.

For the second claim, for every joint distribution with input marginal $\pi$,
\begin{align*}
M_G(P)
&=
I(W;G)+I(U;G|W)+I(V;G|W)-I(U;V|W)\\
&=
I(W,U,V;G)-I(U;V|W,G)\\
&\le H_\pi(G).
\end{align*}
Equality is achieved by taking $W=G$ and $U,V$ constant. 
Consequently, the unconstrained Marton sum-rate of this component is $\max_\pi H_\pi(G)=\log|g(\widetilde\cX)|$.  
Since both receivers observe exactly the same deterministic output, the same quantity is also its true private-message sum-capacity.
\end{proof}

\section{Proof of Theorem \ref{thm:constraint-removal}}
\label{app:constraint-removal}
\begin{proof}
Restrict the input alphabet to $\operatorname{supp}(p^*)$.  By \eqref{eq:average-marginal}, neither coordinate of the seed distribution uses a symbol outside this support.  Hence $p^*$ lies in the relative interior of the reduced simplex.

Let $g$ be a supergradient of $f$ at $p^*$ and choose
$r\in\arg\max_i g_i$.  For sufficiently large $L=\log N$, choose
$\lambda_L$ and $\delta_{i,L}$ as in Theorem~\ref{thm:gradient_shaping}, and construct the receiver-revealed mixture $\widetilde T_L$ described above.  By \eqref{eq:channel-Phi} and Theorem~\ref{thm:gradient_shaping},
\begin{equation}
M_{\widetilde T_L}
\le
\lambda_L f(p^*)
+
\sum_{i\neq r}\delta_{i,L}
\bigl[\hbin(p_i^*)+p_i^*L\bigr].
\label{eq:oneletter-general-final}
\end{equation}

For the two-letter lower bound, use the seed distribution $P^*$ and draw
$K_1,K_2$ independently and uniformly from $[N]$, independently of all seed variables.  Put these labels into the common auxiliary:
\[
\widetilde W=(W,K_1,K_2),
\qquad
\widetilde X_\ell=(X_\ell,K_\ell),
\qquad \ell=1,2.
\]
Let
\[
C:=I_{P^*}(U;V|W).
\]
Condition on the two independently drawn selectors of the two copies of $\widetilde T_L$ and evaluate the affine Marton functional $L_t$.  On the event that both selectors choose the base channel, which has probability $\lambda_L^2$, the contribution is
$L_{t,T^{\otimes2}}(P^*)$.

On every other selector pair, discard all nonnegative finite conditional mutual-information terms.  If coordinate $\ell$ selects shaping component $i$, its output reveals $K_\ell$ exactly on the event $\{X_\ell=i\}$.  Since $K_\ell$ is uniform and independent of the seed distribution, this contributes
$q^*_{X_\ell}(i)L$ to each of the two common-message terms.  The negative term $-C$ is present for every selector pair.  Consequently, uniformly for every $t\in[0,1]$,
\begin{align}
L_{t,\widetilde T_L^{\otimes2}}
&\ge
\lambda_L^2L_{t,T^{\otimes2}}(P^*)
+
L\sum_{i\neq r}\delta_{i,L}
\bigl(q^*_{X_1}(i)+q^*_{X_2}(i)\bigr)
-
(1-\lambda_L^2)C
\nonumber\\
&=
\lambda_L^2L_{t,T^{\otimes2}}(P^*)
+
2L\sum_{i\neq r}\delta_{i,L}p_i^*
-
(1-\lambda_L^2)C,
\label{eq:twoletter-general}
\end{align}
where the second equality follows from \eqref{eq:average-marginal}.  Taking the minimum over $t$ gives
\begin{equation}
M_{\widetilde T_L^{\otimes2}}
\ge
\lambda_L^2M_*^{(2)}
+
2L\sum_{i\neq r}\delta_{i,L}p_i^*
-
(1-\lambda_L^2)C.
\label{eq:twoletter-general-final}
\end{equation}

Subtracting twice \eqref{eq:oneletter-general-final}, the leading label terms cancel exactly:
\begin{align}
M_{\widetilde T_L^{\otimes2}}
-
2M_{\widetilde T_L}
\ge{}&
\lambda_L^2M_*^{(2)}
-
2\lambda_L f(p^*)
-
2\sum_{i\neq r}\delta_{i,L}\hbin(p_i^*)
-
(1-\lambda_L^2)C.
\label{eq:gap-general}
\end{align}
By Theorem~\ref{thm:gradient_shaping},
\[
\lambda_L\to1,
\qquad
\delta_{i,L}\to0.
\]
Therefore the right-hand side of \eqref{eq:gap-general} converges to
\[
M_*^{(2)}-2f(p^*)>0.
\]
It is thus strictly positive for all sufficiently large finite $N$.  Choosing such an $N$ and setting $T'=\widetilde T_L$ proves $M_{T'^{\otimes2}}>2M_{T'}$. 

For the final statement, take $f=M_T$, which is concave by Lemma~\ref{lem:concavity}.  
If $M_T(p^*)<\frac12M_{T^{\otimes2}}(q^*_{X_1X_2})$ holds, the strict inequality allows us to choose a finite feasible two-letter distribution $P^*$ with input marginal $q^*_{X_1X_2}$ and
\[
M_{T^{\otimes2}}(P^*)>2M_T(p^*).
\]
The preceding argument then applies.
\end{proof}

\bibliographystyle{IEEEtran}
\bibliography{ref.bib}

\begin{thebibliography}{10}
\providecommand{\url}[1]{#1}
\csname url@samestyle\endcsname
\providecommand{\newblock}{\relax}
\providecommand{\bibinfo}[2]{#2}
\providecommand{\BIBentrySTDinterwordspacing}{\spaceskip=0pt\relax}
\providecommand{\BIBentryALTinterwordstretchfactor}{4}
\providecommand{\BIBentryALTinterwordspacing}{\spaceskip=\fontdimen2\font plus
\BIBentryALTinterwordstretchfactor\fontdimen3\font minus \fontdimen4\font\relax}
\providecommand{\BIBforeignlanguage}[2]{{%
\expandafter\ifx\csname l@#1\endcsname\relax
\typeout{** WARNING: IEEEtran.bst: No hyphenation pattern has been}%
\typeout{** loaded for the language `#1'. Using the pattern for}%
\typeout{** the default language instead.}%
\else
\language=\csname l@#1\endcsname
\fi
#2}}
\providecommand{\BIBdecl}{\relax}
\BIBdecl

\bibitem{cover1972broadcast}
T.~Cover, ``Broadcast channels,'' \emph{IEEE Transactions on Information Theory}, vol.~18, no.~1, pp. 2--14, 1972.

\bibitem{el2011network}
A.~El~Gamal and Y.-H. Kim, \emph{Network information theory}.\hskip 1em plus 0.5em minus 0.4em\relax Cambridge university press Cambridge, UK, 2011.

\bibitem{Marton1979}
K.~Marton, ``A coding theorem for the discrete memoryless broadcast channel,'' \emph{IEEE Transactions on Information Theory}, vol.~25, no.~3, pp. 306--311, 1979.

\bibitem{geng2014capacity}
Y.~Geng and C.~Nair, ``The capacity region of the two-receiver gaussian vector broadcast channel with private and common messages,'' \emph{IEEE Transactions on Information Theory}, vol.~60, no.~4, pp. 2087--2104, 2014.

\bibitem{ahlswede1975source}
R.~Ahlswede and J.~Korner, ``Source coding with side information and a converse for degraded broadcast channels,'' \emph{IEEE Transactions on Information Theory}, vol.~21, no.~6, pp. 629--637, 1975.

\bibitem{bergmans1973random}
P.~Bergmans, ``Random coding theorem for broadcast channels with degraded components,'' \emph{IEEE Transactions on Information Theory}, vol.~19, no.~2, pp. 197--207, 1973.

\bibitem{gamal1979capacity}
A.~Gamal, ``The capacity of a class of broadcast channels,'' \emph{IEEE Transactions on Information Theory}, vol.~25, no.~2, pp. 166--169, 1979.

\bibitem{gallager1974capacity}
R.~G. Gallager, ``Capacity and coding for degraded broadcast channels,'' \emph{Problemy Peredachi Informatsii}, vol.~10, no.~3, pp. 3--14, 1974.

\bibitem{geng2013marton}
Y.~Geng, A.~Gohari, C.~Nair, and Y.~Yu, ``On {Marton}'s inner bound and its optimality for classes of product broadcast channels,'' \emph{IEEE Transactions on Information Theory}, vol.~60, no.~1, pp. 22--41, 2013.

\bibitem{nair2009capacity}
C.~Nair, ``Capacity regions of two new classes of 2-receiver broadcast channels,'' in \emph{2009 IEEE International Symposium on Information Theory}.\hskip 1em plus 0.5em minus 0.4em\relax IEEE, 2009, pp. 1839--1843.

\bibitem{nair2016optimality}
C.~Nair, H.~Kim, and A.~El~Gamal, ``On the optimality of randomized time division and superposition coding for the broadcast channel,'' in \emph{2016 IEEE Information Theory Workshop (ITW)}.\hskip 1em plus 0.5em minus 0.4em\relax IEEE, 2016, pp. 131--135.

\bibitem{gel1980capacity}
S.~I. Gel'fand and M.~S. Pinsker, ``Capacity of a broadcast channel with one deterministic component,'' \emph{Problemy Peredachi Informatsii}, vol.~16, no.~1, pp. 24--34, 1980.

\bibitem{weingarten2006capacity}
H.~Weingarten, Y.~Steinberg, and S.~S. Shamai, ``The capacity region of the gaussian multiple-input multiple-output broadcast channel,'' \emph{IEEE Transactions on Information Theory}, vol.~52, no.~9, pp. 3936--3964, 2006.

\bibitem{gohari2026capacity}
A.~Gohari, Y.~Liu, and C.~Nair, ``The capacity region for classes of sum-broadcast channels,'' in \emph{2025 IEEE International Symposium on Information Theory (ISIT)}.\hskip 1em plus 0.5em minus 0.4em\relax IEEE, 2026, pp. 1--6.

\bibitem{nair2011note}
C.~Nair, ``A note on outer bounds for broadcast channel,'' \emph{arXiv preprint arXiv:1101.0640}, 2011.

\bibitem{GengGohariNairYu2014}
Y.~Geng, A.~Gohari, C.~Nair, and Y.~Yu, ``The capacity region of classes of product broadcast channels,'' \emph{IEEE Transactions on Information Theory}, vol.~60, no.~1, pp. 22--41, 2014.

\bibitem{gohari2021outer}
A.~Gohari and C.~Nair, ``Outer bounds for multiuser settings: The auxiliary receiver approach,'' \emph{IEEE Transactions on Information Theory}, vol.~68, no.~2, pp. 701--736, 2021.

\bibitem{HajekPursley1979}
B.~Hajek and M.~B. Pursley, ``Evaluation of an achievable rate region for the broadcast channel,'' \emph{IEEE Transactions on Information Theory}, vol.~25, no.~1, pp. 36--46, 1979.

\bibitem{nair08conjecture}
C.~Nair and Z.~V. Wang, ``On the inner and outer bounds for 2-receiver discrete memoryless broadcast channels,'' in \emph{Proceedings of the Information Theory and Applications Workshop (ITA)}, 2008, pp. 226--229.

\bibitem{gohari2012evaluation}
A.~Gohari and V.~Anantharam, ``Evaluation of {Marton}'s inner bound for the general broadcast channel,'' \emph{IEEE Transactions on Information Theory}, vol.~58, no.~2, pp. 608--619, 2012.

\bibitem{jog2009}
V.~Jog and C.~Nair, ``An information inequality for the {BSSC} channel,'' in \emph{Proceedings of the Information Theory and Applications Workshop (ITA)}, 2010.

\bibitem{NairWangGeng2010}
C.~Nair, Z.~V. Wang, and Y.~Geng, ``An information inequality and evaluation of {Marton}'s inner bound for binary input broadcast channels,'' in \emph{Proceedings of the IEEE International Symposium on Information Theory (ISIT)}, 2010, pp. 550--554.

\bibitem{GengJogNairWang2013}
Y.~Geng, V.~Jog, C.~Nair, and Z.~V. Wang, ``An information inequality and evaluation of {Marton}'s inner bound for binary input broadcast channels,'' \emph{IEEE Transactions on Information Theory}, vol.~59, no.~7, pp. 4095--4105, 2013.

\bibitem{geng2011marton}
Y.~Geng, A.~Gohari, C.~Nair, and Y.~Yu, ``On {Marton}’s inner bound for two receiver broadcast channels,'' in \emph{Proceedings of the Information Theory and Applications Workshop (ITA)}, 2011.

\bibitem{GohariNairAnantharam2012}
A.~Gohari, C.~Nair, and V.~Anantharam, ``On {Marton}'s inner bound for broadcast channels,'' in \emph{Proceedings of the IEEE International Symposium on Information Theory (ISIT)}, 2012, pp. 581--585, full version available as arXiv:1202.0898.

\bibitem{GohariElGamalAnantharam2014}
A.~Gohari, A.~El~Gamal, and V.~Anantharam, ``On {Marton}'s inner bound for the general broadcast channel,'' \emph{IEEE Transactions on Information Theory}, vol.~60, no.~7, pp. 3748--3762, 2014.

\bibitem{AnantharamGohariNair2013}
V.~Anantharam, A.~Gohari, and C.~Nair, ``Improved cardinality bounds on the auxiliary random variables in {Marton}'s inner bound,'' in \emph{Proceedings of the IEEE International Symposium on Information Theory (ISIT)}, 2013, pp. 1272--1276.

\bibitem{AnantharamGohariNair2019}
------, ``On the evaluation of {Marton}'s inner bound for two-receiver broadcast channels,'' \emph{IEEE Transactions on Information Theory}, vol.~65, no.~3, pp. 1361--1371, 2019.

\bibitem{NairConcave2013}
C.~Nair, ``Upper concave envelopes and auxiliary random variables,'' \emph{International Journal of Advances in Engineering Sciences and Applied Mathematics}, vol.~5, no.~1, pp. 12--20, 2013.

\bibitem{Nair2020}
------, ``On {Marton}'s achievable region: Local tensorization for product channels with a binary component,'' in \emph{Proceedings of the Information Theory and Applications Workshop (ITA)}, 2020, pp. 1--7.

\bibitem{gohari2025conjecture}
A.~Gohari, Y.~Liu, and C.~Nair, ``A conjecture regarding the optimizers of {Marton}'s inner bound for the two-receiver broadcast channel,'' in \emph{2025 IEEE International Symposium on Information Theory (ISIT)}.\hskip 1em plus 0.5em minus 0.4em\relax IEEE, 2025, pp. 1--6.

\bibitem{Huang2026eg}
M.~Huang and X.~Wang, ``Elimination geometry - from local optima to global realizability: Structural realizability, certification, and repair in statistical learning and ai,'' 2026, research monograph, August 2026.

\bibitem{liu2026counterexamples}
Y.~Liu and M.~Huang, ``Counterexamples to the {Markovity} conjecture for the two-receiver broadcast channel,'' \emph{arXiv preprint arXiv:2608.13170}, 2026.

\end{thebibliography}

\end{document}